\renewcommand{\p@subsection}{}
\renewcommand{\p@subsubsection}{}
\newcommand{\Tau}{\mathrm{T}}
\newcommand{\RNum}[1]{\uppercase\expandafter{\romannumeral #1\relax}}
\providecommand{\ignore}[1]{}
\newif\ifcmnt
\newcommand{\diag}{\mathrm{Diag}}
\newcommand{\defeq}{\doteq}
\newcommand{\cC}{\mathcal{C}}
\newcommand{\cF}{\mathcal{F}}
\newcommand{\cG}{\mathcal{G}}
\newcommand{\cO}{\mathcal{O}}
\numberwithin{equation}{section}
\newtheorem{theorem}{Theorem}[section]
\newtheorem*{theorem*}{Theorem}
\newtheorem*{problem*}{Problem}
\newtheorem{lemma}[theorem]{Lemma}
\newtheorem*{lemma*}{Lemma}
\newtheorem{corollary}[theorem]{Corollary}
\begin{document}

\title{Multi-mode Gaussian State Analysis with Total Photon Counting}

\author{Arik Avagyan}
\affiliation{National Institute of Standards and Technology, Boulder, Colorado 80305, USA}
\affiliation{Department of Physics, University of Colorado, Boulder, Colorado, 80309, USA}
\author{Emanuel Knill}
\affiliation{National Institute of Standards and Technology, Boulder, Colorado 80305, USA}
\affiliation{Center for Theory of Quantum Matter, University of Colorado, Boulder, Colorado 80309, USA}
\author{Scott Glancy}
\affiliation{National Institute of Standards and Technology, Boulder, Colorado 80305, USA}

\begin{abstract} 
The continuing improvement in the qualities of photon-number-resolving detectors opens new possibilities for measuring quantum states of light. In this work we consider the question of what properties of an arbitrary multimode Gaussian state are determined by a single photon-number-resolving detector that measures total photon number. We find an answer to this question in the ideal case where the exact photon-number probabilities are known. We show that the quantities determined by the total photon number distribution are the spectrum of the covariance matrix, the absolute displacement in each eigenspace of the covariance matrix, and nothing else. In the case of pure Gaussian states, the spectrum determines the squeezing parameters. 
\end{abstract}

\maketitle


\section{introduction}

Gaussian states in continuous variable systems are relatively easy to prepare experimentally and can be used in quantum communication, quantum cryptography, quantum sensing and other areas \cite{wang2007quantum, weedbrook:qc2012a}. In this context, the problem of experimentally analyzing and characterizing Gaussian states becomes important. The most common methods of characterizing single-mode and multi-mode states involve homodyne tomography and variants thereof \cite{weedbrook:qc2012a, paris2003purity,laurat2005entanglement, dauria2005characterization,porzio2007homodyne, rehacek2009effective, dauria2009full, paternostro:qc2009a, buono2010quantum, blandino:qc2012a, esposito2014pulsed}. Homodyne tomography is based on quadrature measurements requiring photodiodes that record the intensity of absorbed light in a high-amplitude regime. More recently, several groups have proposed schemes for characterizing Gaussian states with click detectors or photon-number-resolving (PNR) detectors. 
In particular, Ref.~\cite{fiurasek2004how} proposed a scheme that only uses beam splitters and single-photon detectors to measure the purity, squeezing and entanglement of Gaussian states. Ref.~\cite{wenger2004pulsed} verified this scheme experimentally on single-mode Gaussian states. For single mode states several studies \cite{wallentowitz1996unbalanced, konrad1996direct, manko2003photon,manko2009photon, Leibfried1996exp, nogues2000meas} have shown that the statistics of photon counts obtained after displacing the state by different amounts allows one to estimate the Wigner function. Photon counting with single-mode squeezed states have applications in quantum metrology - in particular, they have been shown to enhance the sensitivity of measuring the phase of an interferometer \cite{wu2019quantum} as well as the coherent displacement of a mechanical oscillator \cite{burd2019quantum}. Numerical \cite{bezerra2021quadrature} studies show that comparatively few number of measurements of the photon number of a single mode non-displaced state allow for accurate determination of its squeezing and temperature parameters. A recent experimental study demonstrated that photon counting allows for a better precision measurement of a weakly-squeezed vacuum state as compared to homodyning \cite{zuo2022determination}. Ref.~\cite{burenkov2017full} showed experimentally that the statistics of two-photon counters can be used to reconstruct the mode structure of a parametric down-conversion source. Full characterization of multi-mode Gaussian states has been shown to be theoretically possible using PNR detectors  on each mode after passing the state through  different linear optical circuits \cite{parthasarathy2015from, kumar2020optimal}. The proposed schemes either yield only general features of an arbitrary multi-mode state like the mean displacement and the determinant of the covariance matrix \cite{fiurasek2004how}, or are difficult to realize in practice \cite{parthasarathy2015from, kumar2020optimal}.

In this light, we consider the problem of what can be learned about an arbitrary multi-mode Gaussian state with a very simple setup - the state is measured by a single PNR detector that returns the total number of photons in all modes. For this study, we make the idealizing assumptions that the detector has no losses nor noise and that we learn the exact probability of observing $n$ photons for every $n$.
We find that the total photon number distribution determines the number of modes that are not in vacuum as well as the spectrum of the covariance matrix in these modes. By spectrum we mean the set of eigenvalues with their multiplicities, if there are degeneracies. The absolute values of the displacement within each eigenspace of the covariance matrix are also determined by the distribution. Conversely, any two states
with the same covariance matrix spectrum and absolute eigenspace displacements
have the same photon number statistics. If the state is pure, this implies that the distribution determines the squeezing spectrum, by which we mean the set of squeezing parameters, as well as the absolute displacement along each squeezing axis (or subspace when the given squeezing value is degenerate).
We discuss the interpretation of the covariance matrix spectrum for mixed states and identify representatives with diagonal covariance
matrices for each equivalence class of Gaussian states with identical photon number statistics.

The paper is organized as follows. In Sect.~\ref{sec:problemFormulation} we formulate the problem. Our main result is a parametrization theorem that shows that for Gaussian states, the total photon number distributions are bijectively parametrized by the covariance matrix spectrum
and the absolute displacements in the covariance matrix eigenspaces.
The parametrization theorem is established in Sect.~\ref{sec:paramTheorem}. Our main tool is the Husimi representation, from which we compute
the expectations of the anti-normally ordered powers of the total number operators. Then, we use generating functions to prove the parametrization theorem. In Sect.~\ref{sec:normalParams} we 
show that for pure Gaussian states, the covariance matrix spectrum determines
the squeezing spectrum. For mixed Gaussian states we determine the set of
squeezing spectra of pure states from which the state being measured could be obtained
by adding Gaussian displacement noise. We also study the set of diagonal covariance matrices with the same spectrum. We conclude in Sect.~\ref{sec:conclusion}.

\section{Problem Formulation}\label{sec:problemFormulation}

We assume familiarity with quantum optics mode operators and
  phase space representations. See~\cite{leonhardt:qc1997a} for a
  pedagogical treatment.

  Consider $S$ modes characterized by annihilation operators
  $\hat{a_{i}}$ for $i=1,\ldots, S$. Our analysis does not depend on
  the particular physical realization of these modes, but we refer to
  the excitations of the modes as photons. The canonical quadrature
  operators $\hat{q_{i}}$ and $\hat{p_{i}}$ of mode $i$ are defined so
  that \(\hat{a}_i = \frac{1}{\sqrt{2}} (\hat{q_i}+i\hat{p_i})\) and
  \(\hat{a}_i^\dagger = \frac{1}{\sqrt{2}} (\hat{q_i}-i\hat{p_i})\).
  We define the vector of quadrature operators as \(\vec{\hat{r}} =
  (\hat{q}_1,\hat{p}_1,\hdots, \hat{q}_S, \hat{p}_S) \). We further define the vectors of annihilation and creation operators as $\vec{\hat{a}} = (\hat{a_1},\hdots,\hat{a_S})$ and $\vec{{\hat a}}^\dagger = (\hat{a_1}^\dagger,\hdots,\hat{a_S}^\dagger)$, respectively. We use the
  convention that operators are denoted with ``hats''. Density operators are excluded from this convention.
  The variables without the hats denote scalar values or vectors of values. For
  example, $\vec{r}$ is a vector of $2S$ values, which we interpret as
  values of phase-space coordinates.

  We assume that the state $\rho$ being measured is Gaussian, characterized
  by a displacement $\vec{d}$ with entries $d_i=\langle \hat{r}_{i} \rangle$, and a
  covariance matrix $\Gamma$ with entries $\Gamma_{ij} = \langle
  \hat{r}_i \hat{r}_j + \hat{r}_j \hat{r}_i \rangle - 2 \langle
  \hat{r}_i \rangle \langle \hat{r}_j \rangle$. 
The notation $\langle \ldots \rangle$ denotes
  expectation with respect to $\rho$. With these conventions the covariance matrix of the vacuum state corresponds to the identity. The Wigner function of $\rho$ is given by
  \begin{equation}\label{eqtn 1}
     W(\vec{r}) = \frac{1}{\pi^S} \frac{1}{\sqrt{\det(\Gamma)}} e^{-(\vec{r}-\vec{d})^T \Gamma^{-1}  (\vec{r}-\vec{d})}.
  \end{equation}
This corresponds to Eq.~20 in Sec.~$\RNum{2} A$ of \cite{weedbrook:qc2012a}, where this expression is derived. We note that we use a different convention than Ref.~\cite{weedbrook:qc2012a}. The annihilation operators are defined as \(\hat{a}_i = \frac{1}{2} (\hat{q_i}+i\hat{p_i})\) there, and the covariance matrix is equal to $\frac{1}{2} \Gamma$, which results in a different form of the Wigner function.

  We use the Husimi representation \cite{leonhardt:qc1997a}[Ch.~3] of Gaussian states. The Husimi representation can be obtained from the
  Wigner function by convolution with the Gaussian $\frac{1}{\pi^S} e^{- \abs{\vec{r}}^2} $ and is therefore also Gaussian. With our conventions, the Husimi representation of $\rho$ is 
  \begin{equation}\label{eqtn 2}
    Q(\vec{r}) =  \frac{1}{\pi^S} \frac{1}{\sqrt{\det(I+\Gamma)}} e^{-(\vec{r}-\vec d)^T (\Gamma +I)^{-1}  (\vec{r}-\vec d)},
  \end{equation} 
  where $I$ is the $2S \times 2S$ identity matrix. 

  We analyze the situation where the $S$ modes are measured by an
  ideal PNR detector that does not distinguish the modes. The detector's output is
  the number $n$ of photons observed, and the associated operator is
  the projector $\hat{\Pi}_{n}$ onto the subspace of states with $n$
  photons. Let $\hat n$ denote the total photon number operator on the
  $S$ modes so that $\hat n =
  \sum_{i=1}^{S}\hat{a_{i}}^{\dagger}\hat{a_{i}}$. For this work we
  assume that we have learned the exact probabilites $\langle
\hat{\Pi}_{n} \rangle$ of having $n$ photons for every $n$. This means that
  we can assume as given the photon number distribution. In the next
  section, we solve the following problem: 

  \begin{problem*}What features of $S$-mode Gaussian states are determined by their photon number distribution?
  \end{problem*}

  To help solve the problem, we need a few observations about
  relationships between the photon number distribution, its moments,
  and the expectations of the anti-normally ordered powers of the total
  photon number operator. The moments of the photon number distribution are given by $\langle \hat{n}^{k}\rangle$. Our calculations are
  simplified by considering instead the anti-normally ordered moments
  given by \(\langle \vdots \hat{n}^l\vdots \rangle \).  Here, the
  vertical triple dots denote anti-normal ordering of all mode
  operators in the formal expression between the triple dots - that is, all creation operators are moved to the right of the annihilation operators.

The Husimi representation of states satisfies the optical
    equivalence theorem for anti-normal order. A general treatment of
    this theorem is in Refs. \cite{agarwal1970calculus1, agarwal1970calculus2}. These references
    explain orderings for one mode. To generalize the treatment to
    multiple modes, it suffices to apply the fact that operators from
    different modes commute. For the Husimi representation, it
    implies that expectations of expressions $f(\vec{\hat a},
    \vec{{\hat a}}^\dagger)$ whose terms are already in anti-normal order can be
    evaluated as the expectation of $f(\vec{\alpha},\vec{\bar\alpha})$
    with respect to the Husimi representation of the state. That is,
    if $Q(\vec{\alpha},\vec{\bar\alpha})$ is the Husimi representation
    of $\rho$, then
    \begin{align}
      \langle f(\vec{\hat a}, \vec{{\hat a}}^\dagger ) \rangle_\rho &=
      \langle f(\vec{\alpha}, \vec{\bar \alpha} ) \rangle_H =
         \int \prod_i d\alpha_i d\bar\alpha_i f(\vec{\alpha},\vec{\bar\alpha}) Q(\vec{\alpha},\vec{\bar\alpha}).
    \end{align}
    It follows that anti-normal ordering, like other such orderings, has the property
    that for a general expression $g(\vec{\hat a}, \vec{{\hat a}}^\dagger)$ that may include terms that are not in anti-normal order, we have
    $\langle \vdots g(\vec{\hat a}, \vec{{\hat a}}^\dagger)\vdots\rangle_\rho  =
    \langle g(\vec{\alpha},\vec{\bar\alpha})\rangle_H$ - see Thm. I of Ref.~\cite{agarwal1970calculus1}.
We utilize the expectations of the anti-normally ordered powers of the number operator. The $l$'th power of the total number operator is expressed as ${\hat n}^l=(\sum_{i=1}^S \hat{a_i} {\hat{a_i}}^\dagger)^l$.
    In phase space, $\alpha_i\bar\alpha_i = (p_i^2+q_i^2)/2$, and as a result we obtain
    \begin{align}\label{eqtn 3}
      \langle  \vdots \hat{n}^l \vdots \rangle &= \langle  \vdots (\sum_{i=1}^S \hat{a}_i^\dagger \hat{a}_i)^l \vdots \rangle \nonumber \\
      & = \int \prod_id\alpha_id\bar\alpha_i Q(\vec{\alpha},\vec{\bar\alpha}) (\sum_i\alpha_i\bar\alpha_i)^l \nonumber \\
      & = \frac{1}{2^l} \int d^{2S}\vec{r}  Q(\vec{r}) \abs{\vec{r}}^{2l}. 
    \end{align}

Notice that for Gaussian states these integrals converge. Below we show that the anti-normally ordered moments determine and are determined by the $\langle \hat{n}^{k}\rangle$. This implies that the latter are defined as well. 

  Our result relies on analyzing the generating function for anti-normally ordered moments:
  \begin{align}
       G(z) &= \langle \vdots e^{-z \hat n}\vdots\rangle\nonumber\\
         &=\sum_{j} (-1)^{j}\frac{1}{j!} \langle \vdots \hat n^{j}\vdots\rangle z^{j}.
         \label{eqtn Gz}
  \end{align}
For Gaussian states, we find that $G(z)$ is analytic in a neighborhood
  of $z=0$, see Eqs.~\ref{eq:Gintegral2} and ~\ref{eq:Gintegral3} in the next section. The next paragraph shows that the anti-normally ordered moments carry the
  same information as the usual moments $\langle \hat{n}^k \rangle$. The relationship between the two types of moments implies that the generating function for the $\langle \hat{n}^k \rangle$ is also analytic in a neighborhood of $z=0$, where it is determined by $G(z)$. Crucially, this means that for Gaussian states $G(z)$ determines the total photon number distribution~\cite[Ch.~4]{severini2005elements}.

According to
  Ref.~\cite{shalitin1979trans}, for one mode ($S=1$), the operator-valued
  generating functions $e^{-x\hat{n}}$ and $\vdots e^{-x \hat
      n}\vdots$ are related by
  \begin{equation}\label{usefuleqtn 1}
    e^{-x \hat{n}} = e^x  \vdots  e^{(1-e^{x}) \hat{n}}  \vdots\;.
  \end{equation}
  This identity extends to the total photon number in an arbitrary numbers of modes as follows.
  \begin{align}
    e^{-x \sum_{i=1}^S \hat{n}_i} = \prod_{i=1}^S e^{- x \hat{n}_i}  
       &= \prod_{i=1}^S \left[ e^x  \vdots  e^{(1-e^{x}) \hat{n}_i}  \vdots \right] \nonumber\\
       &=\left[  \prod_{i=1}^S e^x \right]   \vdots  e^{(1-e^{x}) \sum_{i=1}^S \hat{n}_i}  \vdots\nonumber\\
       &=e^{xS}\vdots  e^{(1-e^{x}) \sum_{i=1}^S \hat{n}_i} \vdots,
    \label{usefuleqtn 2}
  \end{align}
where the second line is obtained from the first by applying the observation that antinormal ordering and products over distinct, commuting modes can be interchanged without changing the resulting operator.
Introduce the new variable $z$ satisfying $x=\ln(1+z)$ and substitute in Eq.~\ref{usefuleqtn 2} to obtain
  \begin{equation}\label{usefuleqtn 3}
      e^{- \ln(1+z)  \hat{n}} = (1+z)^S   \vdots  e^{- z \hat{n}}  \vdots.
  \end{equation}
  From this we can write
  \begin{align}\label{usefuleqtn 4}
      G(z) = \langle \vdots  e^{- z \hat{n}}  \vdots \rangle = \langle  e^{- \ln(1+z)  \hat{n}}\rangle/(1+z)^{S}.
  \end{align}
  Both expressions for $G(z)$ are well-defined as generating functions.
This identity implies that the anti-normally ordered moments of degree $j$
  are a linear combination of the usual moments of degree at most $j$ and vice-versa. This can be verified as follows: Eq.~\ref{eqtn Gz} shows that the coefficient of $j$'th power of $z$ in $\langle \vdots  e^{- z \hat{n}}  \vdots \rangle$ is proportional to $\langle \vdots \hat{n}^j \vdots \rangle$. Expanding the rightmost expression in Eq.~\ref{usefuleqtn 4} one can see that the coefficient of $z^j$ is a linear combination of the $\langle \hat{n}^k \rangle$ for $k \leq j$. To see the reverse, multiply both expressions for $G(z)$ by $(1+z)^S$ and substitute $z = e^x -1$. Then, the rightmost expression has the coefficients proportional to the powers of $\langle \hat{n}^j \rangle$, and for each $j$ the transformed expression on the left can be seen to be a linear combination of the $\langle \vdots \hat{n}^k \vdots \rangle$ for $k \leq j$.

\section{The Parametrization Theorem}\label{sec:paramTheorem} 

In this section, we prove the parameterization theorem. We first show
that the anti-normally ordered generating function $G(z)$ defined in
  Eq.~\ref{eqtn Gz} may be expressed as a Gaussian integral by means of the
  expression for the anti-normally ordered moments in terms of the
  Husimi representation in Eq.~\ref{eqtn 3}. Further, we show that $G(z)$ is analytic in a neighborhood of $z=0$. As explained in Sect.~\ref{sec:problemFormulation}, this implies that $G(z)$ determines the total photon number distribution.
We find that $G(z)$ only depends on the spectrum of the state's covariance matrix $\Gamma$ and
the absolute displacement of the state within the eigenspaces of $\Gamma$.
According to the parametrization theorem the reverse also holds, that
is, these parameters are determined by $G(z)$. Let
$\{\lambda_{i}\}_{i=1}^{N}$ be the distinct eigenvalues of $\Gamma$ in
decreasing order. Let $V_{i}$ be the eigenspace of $\Gamma$ for
eigenvalue $\lambda_{i}$ and $k_{i}$ the dimension of $V_{i}$. The
displacement $\vec d$ can be written uniquely as a sum
$\sum_{i=1}^{N}\vec d_{i}$ with $\vec d_{i}\in V_{i}$. Let $d_{i} =
|\vec d_{i}|$. We refer to the family $\{(\lambda_{i}, k_{i},
d_{i})\}_{i=1}^{N}$ as the ``normal parameters'' of the Gaussian
state.

  \begin{theorem}[Parametrization Theorem]\label{theorem 1}
    The total photon number distribution of a Gaussian state determines and is determined by the normal parameters of the state.
  \end{theorem}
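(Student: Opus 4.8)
The plan is to derive a closed form for $G(z)$ and then extract the normal parameters from its singularity structure. First I would turn the generating function into a Gaussian integral: substituting the moment formula Eq.~\ref{eqtn 3} into the series Eq.~\ref{eqtn Gz} and exchanging sum and integral — legitimate for $z$ in a neighborhood of $0$, since the Gaussian decay of $Q$ controls the factor $e^{-z\abs{\vec r}^2/2}$ — gives
\begin{equation}
  G(z) = \int d^{2S}\vec r\, Q(\vec r)\, e^{-\frac{z}{2}\abs{\vec r}^2}. \label{eq:Gintegral2}
\end{equation}
Inserting the Husimi function Eq.~\ref{eqtn 2} and completing the square in the combined quadratic form $(\vec r-\vec d)^T(I+\Gamma)^{-1}(\vec r-\vec d)+\tfrac{z}{2}\abs{\vec r}^2$ reduces this to a standard Gaussian integral. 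Carrying out the matrix algebra, using $\det(I+\Gamma)\det\big((I+\Gamma)^{-1}+\tfrac z2 I\big)=\det\big(I+\tfrac z2(I+\Gamma)\big)$ together with the identity $(I+\Gamma)^{-1}\big((I+\Gamma)^{-1}+\tfrac z2 I\big)^{-1}=\big(I+\tfrac z2(I+\Gamma)\big)^{-1}$, should yield
\begin{equation}
  G(z) = \frac{1}{\sqrt{\det\big(I+\frac{z}{2}(I+\Gamma)\big)}}\,\exp\!\Big(-\tfrac{z}{2}\,\vec d^{\,T}\big(I+\tfrac{z}{2}(I+\Gamma)\big)^{-1}\vec d\Big). \label{eq:Gintegral3}
\end{equation}
Because $\Gamma$ is positive definite, the determinant and the inverse are singular only at the negative reals $z=-2/(1+\lambda_i)$, so Eq.~\ref{eq:Gintegral3} is analytic in the disk $\abs{z}<2/(1+\lambda_1)$; as recalled in Sect.~\ref{sec:problemFormulation}, this analyticity is what lets $G(z)$ stand in for the photon number distribution.

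For the forward implication I would diagonalize $\Gamma=\sum_i\lambda_i P_i$, with $P_i$ the orthogonal projector onto $V_i$. Then $I+\tfrac z2(I+\Gamma)=\sum_i\big(1+\tfrac z2(1+\lambda_i)\big)P_i$, so its determinant is $\prod_i\big(1+\tfrac z2(1+\lambda_i)\big)^{k_i}$, and since $\vec d^{\,T}P_i\vec d=\abs{P_i\vec d}^2=d_i^2$ the exponent collapses onto the eigenspaces. This gives the explicit form
\begin{equation}
  G(z)=\prod_{i=1}^{N}\Big(1+\tfrac z2(1+\lambda_i)\Big)^{-k_i/2}\exp\!\Big(-\tfrac z2\sum_{i=1}^{N}\frac{d_i^2}{1+\frac z2(1+\lambda_i)}\Big),
\end{equation}
which depends on the state only through $\{(\lambda_i,k_i,d_i)\}$. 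Hence the normal parameters determine $G(z)$, and therefore the photon number distribution.

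The converse — that $G(z)$ determines the normal parameters — is the crux, and I expect it to be the main obstacle, since $G$ itself carries branch points and essential singularities that are awkward to compare directly. The key idea is to pass to the logarithmic derivative, which collapses these into an honest rational function:
\begin{equation}
  \frac{G'(z)}{G(z)} = -\frac12\sum_{i=1}^N\frac{k_i(1+\lambda_i)/2}{1+\frac z2(1+\lambda_i)} -\frac12\sum_{i=1}^N\frac{d_i^2}{\big(1+\frac z2(1+\lambda_i)\big)^2}.
\end{equation}
Since $G$ is zero-free away from its singular points, $G'/G$ is determined by $G$ and its only poles sit at the distinct points $z=-2/(1+\lambda_i)$; the displayed expression is thus the unique partial-fraction decomposition of a rational function fixed by the data. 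Uniqueness of partial fractions then lets me read everything off: the pole locations give the $\lambda_i$, the simple-pole residues give the $k_i$, and the double-pole coefficients give the $d_i^2$, hence $d_i=\sqrt{d_i^2}\ge 0$. This recovers the full family $\{(\lambda_i,k_i,d_i)\}$ from $G(z)$ and closes the equivalence in both directions.
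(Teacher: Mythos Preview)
Your proposal is correct and follows essentially the same approach as the paper: express $G(z)$ as a Gaussian integral via the Husimi function, evaluate it in closed form, observe analyticity near $0$, and then extract the normal parameters from the pole structure of the logarithmic derivative $G'/G$. The only difference is cosmetic---you complete the square in matrix notation whereas the paper first rotates to a diagonalizing basis---but the resulting formulas and the partial-fraction argument match exactly.
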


  \begin{proof}
      
     As explained in Sect.~\ref{sec:problemFormulation}, the total photon number
    distribution determines the anti-normally ordered generating
    function $G(z)$. We show that $G(z)$ is analytic in a neighborhood of the origin, which implies that $G(z)$ determines the total photon number distribution. This implies an equivalence between the total photon number distribution and $G(z)$. We further show that $G(z)$ determines and is determined by the normal parameters. The theorem statement then follows from these two equivalences.

    In terms of the Husimi representation 
    $Q(\vec{r})$ and in consideration of Eq.~\ref{eqtn 3},  $G(z)$ is expressed as
    \begin{align}
        G(z) & =   \sum_{l=0}^\infty \frac{(-z)^l}{l!} \langle   \vdots  \hat{n}^l  \vdots \rangle\nonumber\\ 
                &=  \sum_{l=0}^\infty \frac{(-z)^l}{2^l l!} \int dr^{2S} Q(\vec{r}) r^{2l}\nonumber \\
                & = \int dr^{2S} Q(\vec{r}) e^{-\frac{1}{2} z r^2},
       \label{eq:t1prf1}
    \end{align}
    where we used the convention that $r=|\vec{r}|$. According to Eq.~\ref{eqtn 2} $Q(\vec{r})$ is a
  Gaussian with covariance matrix $(\Gamma+I)/2$ and
    displacement $\vec d$. Thus, the integral in Eq.~\ref{eq:t1prf1} is a Gaussian integral that can
   be evaluated to obtain a closed form expression for $G(z)$. See,
   for example, Ref.~\cite[Ch.~4]{mathai1992quadratic}. To simplify
   the expressions, we define $\Gamma' = (\Gamma+I)^{-1}$ and $\vec{y} = \vec{r}-\vec{d}$.
   The evaluation
   goes as follows:
    \begin{align}
      G(z) & = \int dr^{2S} Q(\vec{r}) e^{-\frac{1}{2} z r^2}  \nonumber\\
      & =  \frac{ \sqrt{\det(\Gamma')}}{\pi^S}  \int dr^{2S} e^{-(\vec{r}-\vec{d})^T \Gamma'  (\vec{r}-\vec{d})} e^{-\frac{1}{2} z r^2}   \nonumber\\ 
      &= \frac{ \sqrt{\det(\Gamma')}}{\pi^S}  \int dr^{2S} e^{-(\vec{r}-\vec{d})^T \Gamma'  (\vec{r}-\vec{d})} e^{-\frac{1}{2} z (\vec{r}-\vec{d})^{T}(\vec{r}-\vec{d}) - z (\vec{r}-\vec{d})^{T}\vec{d} - \frac{1}{2} z \vec{d}^{T}\vec{d}}\nonumber\\
      &= \frac{ \sqrt{\det(\Gamma')}}{\pi^S}  \int dy^{2S} e^{-\vec{y}^T \Gamma'  \vec{y}} e^{-\frac{1}{2} z \vec{y}^{T}\vec{y} - z \vec{y}^{T}\vec{d} - \frac{1}{2} z \vec{d}^{T}\vec{d}}\nonumber\\
      &= \frac{ \sqrt{\det(\Gamma')}}{\pi^S}  e^{- \frac{1}{2} z \vec{d}^{T}\vec{d}}\int dy^{2S} e^{-\vec{y}^T (\Gamma'  +\frac{1}{2} z I)\vec{y}- z \vec{y}^{T}\vec{d}}. \label{eq:Gintegral}
    \end{align}
 That $G(z)$ is determined by the normal parameters can be
      deduced from the last expression. First, $\det(\Gamma')$ depends
      only on the eigenvalues of $\Gamma'$, and these eigenvalues are
      derived from the normal parameters as $1/(\lambda_i +1)$ with
      multiplicity $k_i$. Second, we can change variables in the
      integral to diagonalize $\Gamma'$ and standardize $\vec{d}$. Let
      $O$ be an orthogonal matrix for which $O\Gamma' O^T$ is diagonal
      with the eigenvalues in non-ascending order on the diagonal and
      $O^T \vec{d}$ has the property that its nonzero entries are
      non-negative and associated with the first coordinate of each
      eigenspace block of $O\Gamma' O^T$ with the same
      eigenvalue. To achieve the latter property, it suffices to choose appropriate
      orthogonal transformations within each eigenspace block.
      Then $O\Gamma' O^T$ is determined by the $\lambda_i$
      and $k_i$, and the normal parameter $d_i$ is the nonzero entry of $O^T\vec{d}$ associated with
      the eigenspace block for eigenvalue $\lambda_i'= 1/(\lambda_i+1)$ of
      $O\Gamma' O^T$. Changing variables
      according to $\vec{ \tilde{y}} = O \vec y$ is equivalent to replacing
      $\Gamma'$ with $O\Gamma' O^T$ and $\vec{d}$ by $O^T\vec{d}$. This
      equivalence hinges on the rotational invariance of the measure of integration
      $dy^{2S}$. After this transformation, the integral factors as a product
      over each coordinate separately and we find that the value of the integral is determined
      by the normal parameters. Here is the explicit calculation. To express this transformation in the integral we index the new variable of integration $\vec{\tilde{y}}$ according to the eigenspace blocks as
        $\tilde{y}_{ij}$, where $i$ indicates the $i$'th of $N$ blocks and $j$ indicates the $j$'th of $k_i$ coordinates in the block. Then
    \begin{align}\label{eq:Gintegral2}
    G(z) &= \frac{ \sqrt{ \prod_{i=1}^N (\lambda_i')^{k_i} }}{\pi^S}  e^{- \frac{1}{2} z \sum_{i=1}^N d_i^2}\int d \tilde{y}^{2S} e^{ - \sum_{i=1}^N \left[(\lambda_i'+\frac{1}{2}z) \sum_{j=1}^{k_i}\tilde{y}_{ij}^2\right]  - \sum_{i=1}^N z \tilde{y}_{i1} d_i}
    \end{align} 
 The integrand in Eq.~\ref{eq:Gintegral2} factors as a product of exponentials, each a function of one of the coordinates $\tilde{y}_{ij}$. Therefore, the integral can be expressed as a product of one-dimensional Gaussian integrals. Provided $\frac{1}{2}z > - \textrm{min} \{ \lambda_i'  \}$, every integral in the product is finite, so $G(z)$ evaluates to
\begin{align}\label{eq:Gintegral3}
    G(z) &= \frac{ \sqrt{ \prod_{i=1}^N (\lambda_i')^{k_i} }}{\pi^S}  e^{- \frac{1}{2} z \sum_{i=1}^N d_i^2}  \prod_{i=1}^N \left[\sqrt{\frac{\pi}{ \lambda_i'+\frac{1}{2}z }}\right]^{k_i} e^{\frac{z^2d_i^2}{4(\lambda_i'+\frac{1}{2}z)} } \nonumber \\
    & =  \prod_{i=1}^N \left[\frac{ \lambda_i' }{\lambda_i'+\frac{1}{2}z} \right]^{k_i/2}  e^{- \frac{1}{2} z d_i^2+\frac{z^2d_i^2}{4(\lambda_i'+\frac{1}{2}z)} } \nonumber \\
    &= \prod_{i=1}^N \left[\frac{ \lambda_i' }{\lambda_i'+\frac{1}{2}z} \right]^{k_i/2}  e^{-\frac{-z d_i^2 \lambda_i'}{2(\lambda_i'+\frac{1}{2}z)} } .
\end{align} 
    Since $\lambda_i >0$ the minimum of the $\lambda_i'$ is a strictly positive number. Therefore, $G(z)$ is analytic in a neighborhood of the origin.

One way to obtain the normal parameters from $G(z)$ is to look at the first derivative of its natural logarithm. $\ln(G(z))$ is a multi-valued function, where the different ``branches'' differ by an additive constant. Thus, the derivative of $\ln(G(z))$ is a well-defined, single valued function for $z>-2\min\{\lambda_i'\}$: 
    \begin{align}
       L(z) &\defeq \frac{d \ln(G(z))}{dz}  
           \nonumber\\
          &= 
          -\frac{1}{2}\sum_{i}\frac{d}{dz}k_{i} \ln(\lambda_{i}' + \frac{1}{2}z)
          -\frac{1}{2}\sum_{i}\frac{d}{dz} z d_{i}^{2} \lambda_{i}'(\lambda_{i}'+\frac{1}{2}z)^{-1}
          \nonumber\\
          &= -\frac{1}{4}\sum_{i} k_{i}(\lambda_{i}'+\frac{1}{2}z)^{-1}
                -\frac{1}{2}\sum_{i} d_{i}^{2} \lambda_{i}' (\lambda_{i}' +\frac{1}{2}z)^{-1}
                +\frac{1}{4}\sum_{i}zd_{i}^{2} \lambda_{i}' (\lambda_{i}'+\frac{1}{2}z)^{-2}\nonumber\\
           &= -\sum_{i}  \left(\frac{k_{i}}{2} (2\lambda_{i}' + z)^{-1}
                              + 2d_{i}^{2} (\lambda_{i}')^2(2\lambda_{i}'+z)^{-2}\right).
           \label{eq:ddzlogg}
    \end{align}
 Because the $G(z)$ is defined in a neighborhood of the origin, $L(z)$ is also defined in a neighborhood of the origin, but it can be extended to a function with a maximal domain of definition on the complex plane.  This is true for any function defined on a non-empty, open subset of the complex plane, and the procedure is known as analytic continuation \cite[Ch.~16]{rudin1987real}. In this case the analytic continuation of $L(z)$ is the extension of the domain to all $z$ where the expression on the right side of Eq.~\ref{eq:ddzlogg} is defined. We refer to the analytic continuation of $L(z)$ as $L_{\text{a.c.}}(z)$. According to Eq.~\ref{eq:ddzlogg}, $L_{\text{a.c.}}(z)$ is analytic except at poles of at most second order at $z_i=-2\lambda_i'$ for each $i$.  By uniqueness of analytic continuations, the locations of the poles and their coefficients are determined by $G(z)$.  The positions $z_{i}$ of the poles and the coefficients of the corresponding orders $1/(z-z_{i})$ and $1/(z-z_{i})^{2}$ can in principle be extracted by contour integration.  The position of each pole determines a $\lambda_{i}'$ and therefore a $\lambda_{i}$.  From the coefficient of the pole at $z_i$ we obtain the multiplicity parameter $k_{i}$ and the displacement $d_{i}$. It follows that $G(z)$ determines the normal parameters.
\end{proof}

\section{Interpretation of Normal Parameters for Pure and Mixed States}\label{sec:normalParams}

Thm.~\ref{theorem 1} shows that the total photon number
distribution determines the spectrum of the covariance matrix
and the absolute displacement in each eigenspace, and nothing
else. But what does this tell us about the physical properties of the
Gaussian state such as the amount of squeezing along different
directions of phase-space or the temperature parameters (defined in the paragraph below) of the modes? The set of all Gaussian states can be divided into equivalence
classes, such that the states in a given equivalence class have the
same normal parameters. We are interested in
characterizing the physical properties of the states that belong to
the same equivalence class. We show that for pure Gaussian
  states, the squeezing parameters are determined by the normal
  parameters. We characterize the set of normal
  parameters of Gaussian states, and we show that for such normal parameters, there is
  always a Gaussian state with diagonal covariance matrix in a fixed
  mode basis with these normal parameters. We further investigate sets of states that have the same normal parameters and whose covariances are all diagonal in the same mode basis. The background material
for this section can be found in reviews and
textbooks such as Ref.~\cite{serafini2017quantum}. 

Let $\Gamma$ be the $2S\times 2S$ covariance matrix of the observed
state in some mode basis. The mode basis determines an antisymmetric matrix $J$ that is preserved by the action of Gaussian unitaries on the mode operators. We order the coordinates so
that the antisymmetric matrix $J$ is block diagonal with $S$ blocks of
the form $\begin{pmatrix} 0& 1\\-1&0\end{pmatrix}$.  Gaussian
  unitaries that involve no displacement are characterized by
  transformation matrices $A$ that satisfy $A^{T}J A = J$. Such matrices are called symplectic. There exists a symplectic
matrix $A$ such that $\Gamma = A^{T}\Tau A$ \cite{serafini2017quantum}[Secs.~3.2.3 and ~3.2.4] with $\Tau$
diagonal and consisting of $S$ blocks of the form $\begin{pmatrix}\nu_{i} & 0 \\
  0 & \nu_{i}\end{pmatrix}$ with $\nu_{i}\geq 1$, where we normalized
mode operators so that the vacuum covariance matrix is the
identity. We refer to $\Tau$ as the symplectic diagonalization of $\Gamma$, and to the family consisting of the $\nu_i$ as the symplectic spectrum of $\Gamma$. The Gaussian state with covariance matrix $\Tau$ is thermal in each mode, and the modes are uncorrelated. We call such states ``independently thermal states'', where $\nu_{i}$ is the temperature parameter
for the $i$'th mode. In terms of the expected number of
quanta in mode $i$, the temperature parameter $\nu_{i}$ is given by
$\nu_{i}=2 \langle \hat{n}_i \rangle +1$.  Symplectic transformations
can be physically realized by a combination of squeezing and 
linear optical transformations.  Passive linear optical
transformations are represented by symplectic matrices $O$ that are
also orthogonal, that is $O^{T}O=I$.  Every symplectic matrix has a
representation $A=K Q L$ where $K$ and $L$ are symplectic and
orthogonal, and $Q$ squeezes each mode by different amounts \cite{serafini2017quantum}[Sec.~5.1.2] . Such a
$Q$ is diagonal with diagonal blocks of the form $\begin{pmatrix}
  e^{r_i} & 0 \\ 0 & e^{-r_i} \end{pmatrix} $ where $r_{i}$ is the
squeezing parameter for mode $i$.  The squeezing parameters of $\Gamma$ are determined by $Q$. For one mode, $S=1$, the symplectic diagonalization $\Tau$ is proportional to the
identity and commutes with $K$. Consequently $\Gamma=L^TQ\Tau QL$, where $Q\Tau Q$ has spectrum $(\nu_1e^{2r_1},\nu_1e^{-2r_1})$, and therefore, so does $\Gamma$. 
In this case, the thermal and squeezing parameters are determined by the spectrum of $\Gamma$.  

For a multimode state where $\Gamma$ is diagonal, the previous paragraph implies that $\Gamma$ is composed of $S$ consecutive blocks of the form $\begin{pmatrix}\nu_ie^{2r_i}&0\\0&\nu_ie^{-2r_i}\end{pmatrix}$, where $\nu_i$ and $r_i$ are the temperature and squeezing parameters of mode $i$, respectively. The product of the diagonal elements of these $2\times 2$ matrices are $\nu_i^2$,
which satisfy $\nu_i^2 \geq 1$.
Conversely, consider any diagonal positive matrix $M$ with $S$ diagonal
  $2\times 2$ blocks, where the block for mode $i$ is of the form
  $\begin{pmatrix}\gamma_{i}&0\\0&\gamma_{i}'\end{pmatrix}$ with
  $\gamma_{i}\gamma_{i}' \geq 1$. Then $M$ is the covariance matrix of
  a Gaussian state. To see this it suffices to transform for each $i$,
  the $i$'th mode's block with the symplectic diagonalization $2\times 2$
  matrix
  $B_{i}=\begin{pmatrix}(\gamma_{i}'/\gamma_{i})^{1/4}&0\\0&(\gamma_{i}/\gamma_{i}')^{1/4}\end{pmatrix}$. This
  gives a covariance matrix that is independently thermal in each mode
  as described above.  We say that $M$ is the covariance matrix of a
  Gaussian state where the $i$'th mode is a squeezed thermal state.
  The $i$'th mode has temperature parameter
  $\nu_{i}=\sqrt{\gamma_{i}\gamma_{i}'}$ and squeezing parameter
  $r_{i}=\ln(\gamma_{i}/\nu_{i})/2$.

We call covariance matrices of Gaussian pure states ``pure covariance matrices''. To determine these states' parameters, we need the following characterizations of pure covariance matrices:
\begin{lemma}\label{lem:covbasics}
  Let $\Gamma$ be a covariance matrix of a Gaussian state on $S$
  modes.  The following are equivalent: 1. The matrix $\Gamma$ is
  pure.  2. $\det(\Gamma)=1$.  3. The eigenvalues
  $(\gamma_{j})_{j=1}^{2S}$ in non-ascending order of $\Gamma$ satisfy
  the tight pairing condition $\gamma_{j}\gamma_{2S+1-j}=1$. Furthermore, in case
  3.~the quantities $\ln(\gamma_{j})/2$ for $j\leq S$ are the
  squeezing parameters of the state.
\end{lemma}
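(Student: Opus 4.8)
The plan is to prove the cycle of implications $1 \Leftrightarrow 2$ together with $1 \Rightarrow 3 \Rightarrow 2$, which establishes equivalence of all three conditions, and then to read off the squeezing parameters directly from the eigenvalue computation used for $1 \Rightarrow 3$. All the ingredients are the symplectic diagonalization $\Gamma = A^{T}\Tau A$ and the Euler-type decomposition $A = KQL$ recalled earlier in this section.

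First I would settle $1 \Leftrightarrow 2$ by separately relating each of the conditions to the symplectic spectrum. Taking determinants in $A^{T}JA = J$ gives $\det(A)^{2}=1$, so $\det(\Gamma) = \det(\Tau) = \prod_{i=1}^{S}\nu_{i}^{2}$; since every $\nu_{i}\geq 1$, the determinant equals $1$ exactly when all $\nu_{i}=1$. To tie this to purity I would use that symplectic matrices represent Gaussian unitaries, which preserve purity, so the state with covariance $\Gamma$ is pure iff the independently thermal state with covariance $\Tau$ is pure. The latter is a product of single-mode thermal states with mean photon numbers $(\nu_{i}-1)/2$, which is pure iff every mode is vacuum, i.e.\ every $\nu_{i}=1$. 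Chaining these gives $1 \Leftrightarrow (\text{all }\nu_{i}=1) \Leftrightarrow 2$.

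Next, for $1 \Rightarrow 3$ I would combine purity, which forces $\Tau = I$, with $A = KQL$, where $K,L$ are orthogonal symplectic and $Q$ is diagonal with blocks $\mathrm{diag}(e^{r_{i}},e^{-r_{i}})$. Then
\begin{align}
\Gamma = A^{T}A = L^{T}Q^{T}K^{T}KQL = L^{T}Q^{2}L,
\end{align}
using $K^{T}K=I$ and $Q^{T}=Q$. Thus $\Gamma$ is orthogonally similar to $Q^{2}$, so its eigenvalues are exactly the reciprocal pairs $\{e^{2r_{i}},e^{-2r_{i}}\}_{i=1}^{S}$. Sorting the $2S$ eigenvalues in non-ascending order therefore pairs the largest with the smallest, the second-largest with the second-smallest, and so on, each product being $1$, which is precisely $\gamma_{j}\gamma_{2S+1-j}=1$. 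Reading off the top half yields $\gamma_{j}=e^{2r_{j}}$ for $j\leq S$ (in the convention $r_{i}\geq 0$), so $\ln(\gamma_{j})/2 = r_{j}$ recovers the squeezing parameters, establishing the final clause. The implication $3 \Rightarrow 2$ is then immediate, since $\det(\Gamma)=\prod_{j=1}^{2S}\gamma_{j}=\prod_{j=1}^{S}(\gamma_{j}\gamma_{2S+1-j})=1$, and combined with $2\Rightarrow 1$ this closes the loop.

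The one step that is not pure computation, and hence the main obstacle, is the equivalence between purity and a trivial symplectic spectrum; I would justify it from the stated relation $\nu_{i}=2\langle\hat{n}_{i}\rangle+1$ together with invariance of purity under Gaussian unitaries, rather than by any direct manipulation of $\Gamma$. A secondary bookkeeping point is the sign convention for the $r_{i}$: because the top-half eigenvalues satisfy $\gamma_{j}\geq 1$, the identification $\ln(\gamma_{j})/2$ produces the non-negative squeezing magnitudes, consistent with taking the squeezing parameters to be non-negative.
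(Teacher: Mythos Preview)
Your proposal is correct and follows essentially the same route as the paper: symplectic diagonalization $\Gamma=A^{T}\Tau A$ together with $\det(A)^{2}=1$ to get $1\Leftrightarrow 2$, then the Euler decomposition $A=KQL$ to reduce the pure case to $\Gamma=L^{T}Q^{2}L$ and read off the paired spectrum and squeezing parameters, closing the loop via $3\Rightarrow 2$. The paper's proof is virtually identical in structure and content.
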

\begin{proof}
  The equivalence of 1. and 2. can be found in \cite[Ch. 3, Sec. 5]{serafini2017quantum}, but we provide a proof for completeness.
  Since displacements are realized unitarily and do not affect the covariance matrix, we may assume that the state
  is undisplaced so that the quadrature operators have zero mean.
  Write $\Gamma=A^{T}\Tau A$ with $A$ simplectic and $\Tau$ diagonal
  with thermal blocks. Since $A$ is realized by a Gaussian unitary transformation,
  $\Gamma$ is pure iff $\Tau$ is.
  The covariance matrix $\Tau$ is pure iff $\Tau=I$, or equivalently, iff the temperature parameters of all modes are $0$. The identity $A^{T} J A = J$ implies that
  $\det(A)=\pm 1$. Thus $\det(\Gamma)= \det(\Tau)$. The form of $\Tau$
  implies that $\det(\Tau) \geq 1$ with $\det(\Tau)=1$ iff all temperature parameters
  are zero, that is, iff $\Tau$ is the covariance matrix of vacuum.
  This proves the equivalence of 1. and 2. 

  Write $A=K Q L$ with $K$ and $L$ simplectic orthogonal and $Q$
  diagonal with blocks of the form $\begin{pmatrix} e^{r_j} & 0
    \\ 0 & e^{-r_j} \end{pmatrix}$. We may assume without loss of
  generality that $r_{j}\geq 0$. According to the previous
  paragraph, if $\Gamma$ is pure, then $\Tau=I$.  Since
  $K^{T}K=I$, we have $\Gamma = A^{T}\Tau A = L^{T} Q^{T} K^{T} \Tau KQL = L^{T}Q^{2}L$. Since $L$ is
  orthogonal, the spectrum of $\Gamma$ is that of $Q^{2}$, and the
  pairing condition is satisfied by $Q^{2}$. The relationship of
  the eigenvalues to the squeezing parameters is implied by this
  form.  Conversely, suppose that the pairing condition is
  satisfied by $\Gamma$. Then $\det(\Gamma)=1$ so $\Gamma$ is
  pure.
\end{proof}

The next theorem establishes the relationship between normal parameters
and squeezing parameters of pure Gaussian states.

\begin{theorem}\label{thm:pure}
  Let $\cF=\{(\lambda_{i},k_{i},d_{i})\}_{i=1}^{N}$ be the family of
  normal parameters of a Gaussian state. The state is pure iff
  $\prod_{i}\lambda_{i}^{k_{i}}=1$. For pure states, the squeezing
  parameters are determined as follows: Let $(\gamma_j)_{j=1}^{2S}$ be the
  non-ascending sequence of length $2S=\sum_{i}k_{i}$ in which
  $\lambda_{i}$ occurs $k_{i}$ times. The $S$ squeezing parameters
  of the state are given by $\ln(\gamma_{j})/2$ for $j=1,\ldots,
  S$. 
\end{theorem}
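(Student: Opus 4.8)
The plan is to obtain both assertions as essentially immediate consequences of Lemma~\ref{lem:covbasics}, the key observation being that the normal parameters record precisely the spectrum of $\Gamma$ with multiplicities, while the displacement data $d_i$ play no role here (purity and squeezing are properties of the covariance matrix alone, since displacements are implemented by Gaussian unitaries that leave $\Gamma$ unchanged).

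First I would note that the eigenvalues of $\Gamma$, listed with multiplicity, are exactly the values $\lambda_i$ each repeated $k_i$ times; this is the definition of the normal parameters, and in particular $\sum_i k_i = 2S$. Consequently $\det(\Gamma)=\prod_{j=1}^{2S}\gamma_j=\prod_i \lambda_i^{k_i}$. Lemma~\ref{lem:covbasics} gives the equivalence of purity with $\det(\Gamma)=1$, so the state is pure iff $\prod_i \lambda_i^{k_i}=1$. This settles the purity criterion.

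For the squeezing parameters I would observe that the sequence $(\gamma_j)_{j=1}^{2S}$ defined in the theorem---the non-ascending arrangement in which $\lambda_i$ occurs $k_i$ times---is, by the previous observation, simply the non-ascending list of eigenvalues of $\Gamma$. When the state is pure, condition 3 of Lemma~\ref{lem:covbasics} (the tight pairing $\gamma_j\gamma_{2S+1-j}=1$) holds, and its concluding clause identifies $\ln(\gamma_j)/2$ for $j\le S$ as the squeezing parameters of the state. This is exactly the assertion to be proved.

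The main obstacle is not analytic but one of bookkeeping and reduction: one must confirm that the ``squeezing parameters'' named in this theorem coincide with those defined via the decomposition $A=KQL$ that Lemma~\ref{lem:covbasics} relies on, and that it is precisely the purity hypothesis (equivalent to the tight-pairing condition) that licenses invoking the concluding clause of that lemma. Beyond verifying that the displacement data in the normal parameters is genuinely irrelevant---which follows because unitary displacements fix $\Gamma$ and hence its symplectic and squeezing decompositions---there is no hidden difficulty, so the proof is short once Lemma~\ref{lem:covbasics} is in hand.
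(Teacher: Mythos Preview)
Your proposal is correct and follows essentially the same route as the paper: both reduce the purity criterion to $\det(\Gamma)=\prod_i\lambda_i^{k_i}=1$ via Lemma~\ref{lem:covbasics}, and both identify the squeezing parameters from the spectrum of $\Gamma$. The only cosmetic difference is that you invoke the ``Furthermore'' clause of Lemma~\ref{lem:covbasics} directly, whereas the paper re-derives it by writing $\Gamma=L^{T}Q K^{T}IKQL=L^{T}Q^{2}L$ and reading off the spectrum of $Q^{2}$---but that is precisely the argument already contained in the lemma's proof.
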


\begin{proof}
  Let $\Gamma$ be the covariance matrix for the Gaussian state.
  For the first statement, it suffices to observe that $\prod_{i}\lambda_{i}^{k_{i}}$
  is the determinant of $\Gamma$ and apply Lem.~\ref{lem:covbasics}.
  If the state is pure, write $\Gamma = L^{T}QK^{T} I K Q L$ as discussed
  at the beginning of this section. Since $K$ is orthogonal, $K^{T}IK = I$
  and $\Gamma = L^{T}Q^{2}L$. Since $L$ is orthogonal, the spectrum
  of $\Gamma$ is the spectrum of $Q^{2}$, whose entries are
  $e^{\pm 2 r_{j}}$, where the $r_{j}$ are the squeezing parameters.
  This proves the second statement.
\end{proof}

As noted at the beginning of the section, for one mode, the squeezing and the thermal
parameters of a Gaussian state are determined by the spectrum. Therefore, in this case the normal parameters determine the temperature and squeezing parameter of the state. If the state is
  unsqueezed, one can determine the absolute displacement. Otherwise, one can determine the absolute displacements in the squeezed and in the antisqueezed directions.
For mixed Gaussian states on two or more modes, it is in
general not possible to determine the squeezing and thermal
parameters from the normal parameters, but we can determine
diagonal representatives of the set of Gaussian states with the
same normal parameters and characterize the set of normal parameters.

\begin{lemma} \label{lem:exdiagcov}
  Let $\Gamma$ be the covariance matrix of a Gaussian
  state.  Then there exists a diagonal covariance matrix $D$ of a Gaussian state with the same spectrum as $\Gamma$.
\end{lemma}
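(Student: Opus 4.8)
The plan is to reduce the lemma to a statement purely about the eigenvalues of $\Gamma$ and then invoke the diagonal-block construction given earlier in this section. Recall from that construction that a diagonal positive matrix built from $2\times2$ blocks $\begin{pmatrix}\gamma & 0\\ 0 & \gamma'\end{pmatrix}$ is the covariance matrix of a Gaussian state precisely when every block satisfies $\gamma\gamma'\geq 1$. Writing the eigenvalues of $\Gamma$ in non-ascending order as $\gamma_1\geq\cdots\geq\gamma_{2S}$, I would place them on the diagonal of a matrix $D$ using the tight pairing, i.e.\ block $j$ equal to $\diag(\gamma_j,\gamma_{2S+1-j})$. By construction $D$ has the same spectrum as $\Gamma$, so the entire content of the lemma becomes the claim that this pairing is admissible, namely that $\gamma_j\gamma_{2S+1-j}\geq 1$ for every $j=1,\dots,S$.

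To prove the pairing inequality I would start from the bona fide condition $\Gamma + iJ \succeq 0$, which is equivalent to all symplectic eigenvalues being at least $1$ (see Ref.~\cite{serafini2017quantum}). The first step is a two-dimensional fact: for any real unit vector $u$, setting $v=Ju$ (a unit vector orthogonal to $u$), the compression of $\Gamma+iJ$ to $\mathrm{span}_{\cmplx}\{u,v\}$ is the positive semidefinite Hermitian matrix $\begin{pmatrix} u^T\Gamma u & u^T\Gamma v - i\\ u^T\Gamma v + i & v^T\Gamma v\end{pmatrix}$. Its determinant being nonnegative gives $(u^T\Gamma u)(v^T\Gamma v)\geq 1$. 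Writing $R(u)=u^T\Gamma u/|u|^2$ for the Rayleigh quotient, this reads $R(u)\,R(Ju)\geq 1$ for all nonzero $u$.

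The main step, and the part I expect to be the real obstacle, is upgrading this single inequality into the full chain $\gamma_j\gamma_{2S+1-j}\geq 1$ for all $j$, rather than merely for the extreme pair. The idea is a dimension count driven by $J$. Fix $j$ and suppose, for contradiction, that strictly more than $j-1$ eigenvalues of $\Gamma$ are smaller than $1/\gamma_j$; let $Z$ be the span of the corresponding eigenvectors, so $\dim Z\geq j$ and $R(u)<1/\gamma_j$ on $Z\setminus\{0\}$. Applying the two-dimensional fact, $R(Jw)>\gamma_j$ for every nonzero $w\in Z$, so $JZ$ is a subspace of dimension at least $j$ on which the Rayleigh quotient strictly exceeds $\gamma_j$. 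But the maximal dimension of a subspace on which $R>\gamma_j$ equals the number of eigenvalues exceeding $\gamma_j$, which is at most $j-1$ because $\gamma_j$ is the $j$-th largest eigenvalue. This contradiction shows that at most $j-1$ eigenvalues lie below $1/\gamma_j$, hence at least $2S+1-j$ of them are at least $1/\gamma_j$, giving $\gamma_{2S+1-j}\geq 1/\gamma_j$, which is the desired pairing inequality.

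With the pairing inequality in hand, the conclusion is immediate: the block-diagonal matrix $D$ assembled from the tightly paired eigenvalues is a valid covariance matrix by the earlier construction, and it has the same spectrum as $\Gamma$. The only place where genuine care is needed is handling eigenvalue degeneracies in the dimension count (the counts of eigenvalues $>\gamma_j$ and $<1/\gamma_j$ must be taken with multiplicity and the strict inequalities tracked), but the Courant--Fischer characterization of eigenvalues makes this routine.
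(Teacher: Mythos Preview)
Your argument is correct, and it reaches the same key inequality $\gamma_j\gamma_{2S+1-j}\geq 1$ that the paper isolates, but by a genuinely different route. The paper obtains the pairing inequality by comparing $\Gamma$ to a pure covariance matrix lying below it: writing $\Gamma=A^{T}\Tau A$ with $A$ symplectic and $\Tau\geq I$ thermal, one has $\Gamma\geq \Gamma_p=A^{T}A$, which is pure; Weyl's monotonicity principle then gives $\gamma_j\geq\gamma_j'$, and the tight pairing $\gamma_j'\gamma_{2S+1-j}'=1$ for pure states (Lem.~\ref{lem:covbasics}) yields $\gamma_j\gamma_{2S+1-j}\geq 1$. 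Your approach instead extracts the inequality directly from the bona fide condition $\Gamma+iJ\succeq 0$ via $2\times 2$ compressions and a Courant--Fischer dimension count. The paper's proof is shorter and exploits the already-established pure-state lemma together with Williamson's normal form; yours is more self-contained in that it avoids both the Williamson decomposition and Weyl's theorem, at the price of the subspace argument. Either way the final construction of $D$ from the tightly paired eigenvalues is the same.
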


\begin{proof}
  To prove the lemma we use the fact that there exists a pure state
  covariance matrix $\Gamma_p$ such that $\Gamma_p\leq \Gamma$ and
  then apply Weyl's monotonicity principle
  \cite[Ch.~3]{horn1991topics} to compare the spectra. For the first
  step, we write $\Gamma = A^{T} \Tau A$ with $A$ simplectic and
  $\Tau$ independently thermal in each mode. Then $\Tau\geq I$, and
  I is the covariance matrix of vacuum, which is a pure Gaussian
  state.  Therefore, $\Gamma_p = A^{T}I A$ is the covariance
  matrix of a pure state, and $\Gamma = A^{T}\Tau A \geq A^{T}I A =
  \Gamma_p$.  Let $(\gamma_{j})_{j=1}^{2S}$ and
  $(\gamma_{j}')_{j=1}^{2S}$ be the eigenvalues of $\Gamma$ and
  $\Gamma_p$ in non-ascending order.  By Weyl's monotonicity
  principle, $\gamma_{j}\geq \gamma_{j}'$.  By
  Lem.~\ref{lem:covbasics}, for $j\leq S$ we have
  $\gamma_{j}'\gamma_{2S+1-j}'=1$, which implies that
  $\gamma_{j}\gamma_{2S+1-j}\geq 1$.  Let $D$ be the $2S\times 2S$
  diagonal matrix where the $j$'th mode's $2\times 2$ block has
  diagonal $(\gamma_{j},\gamma_{2S+1-j})$.  Then $D$ has the same
  spectrum as $\Gamma$ and as observed at the beginning of this
  section, $D$ is the covariance matrix of a Gaussian state.
  
\end{proof}

\begin{corollary}\label{s4cor1}
  Consider the family of normal parameters
  $\cF=\{(\lambda_{i},k_{i},d_{i})\}_{i=1}^{N}$ for $S$ modes. Let $(\gamma_j)_{j=1}^{2S}$
  be the non-ascending sequence of length $2S = \sum_{i}k_{i}$ in which
  $\lambda_{i}$ occurs $k_{i}$ times. 
  There exists a displaced Gaussian state with diagonal covariance matrix whose family of normal parameters is $\cF$.
\end{corollary}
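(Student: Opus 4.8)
The plan is to build the required state by hand: take the diagonal covariance matrix supplied by Lem.~\ref{lem:exdiagcov} and then attach a displacement supported on a single coordinate axis within each eigenspace, chosen so that the eigenspace displacements come out equal to the prescribed $d_i$. Since a displacement can be applied to any Gaussian state by a displacement unitary without altering its covariance matrix, as used in the proof of Lem.~\ref{lem:covbasics}, the only things left to verify are that the spectrum of the covariance matrix and the absolute eigenspace displacements agree with $\cF$.

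First I would invoke Lem.~\ref{lem:exdiagcov} to obtain a diagonal covariance matrix $D$ of a Gaussian state whose spectrum equals $(\gamma_j)_{j=1}^{2S}$, that is, whose eigenvalues are the $\lambda_i$ with multiplicities $k_i$. Because $D$ is diagonal, its eigenspace $V_i$ for the eigenvalue $\lambda_i$ is the coordinate subspace spanned by the standard basis vectors $\hat e_m$ for which $D_{mm}=\lambda_i$, and these subspaces are mutually orthogonal.

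Next I would fix the displacement. For each $i$ I choose one index $m_i$ with $D_{m_i m_i}=\lambda_i$ and set the displacement vector $\vec d$ to have entry $d_i$ in coordinate $m_i$ and $0$ in every other coordinate. The pair $(D,\vec d)$ then defines a Gaussian state. To read off its normal parameters, note that the eigenvalues of $D$ and their multiplicities are the $(\lambda_i,k_i)$ by construction, so the unique decomposition $\vec d=\sum_i \vec d_i$ with $\vec d_i\in V_i$ is simply $\vec d_i = d_i\hat e_{m_i}$, which has length $d_i$. Since the $V_i$ are spanned by disjoint sets of coordinate axes, the choice for one eigenspace does not perturb the components in the others. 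Hence the normal parameters of $(D,\vec d)$ are exactly $\cF$.

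The construction is short, and the one point that demands care is precisely the last step: that decomposing $\vec d$ into its eigenspace components yields exactly the magnitudes $d_i$. This is where diagonality is essential, since it guarantees that the eigenspaces are orthogonal coordinate subspaces, so placing all of $\vec d_i$ on a single axis of $V_i$ produces an eigenspace displacement of magnitude $d_i$ and leaves the components in the other eigenspaces untouched. I do not anticipate any genuine obstacle beyond this bookkeeping, because Lem.~\ref{lem:exdiagcov} already settles the spectrum and the displacement is freely adjustable.
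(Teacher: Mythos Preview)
Your proposal is correct and follows essentially the same approach as the paper: invoke Lem.~\ref{lem:exdiagcov} to obtain a diagonal covariance matrix with the required spectrum, then displace a single coordinate within each eigenspace by $d_i$. Your write-up is in fact a bit more explicit than the paper's in verifying that the resulting eigenspace displacements have the correct magnitudes, but the argument is the same.
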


\begin{proof}
  By Lem.~\ref{lem:exdiagcov} there exists a diagonal covariance
  matrix of a Gaussian state $\rho'$, whose non-zero
  entries are composed of the $\gamma_i$.  We may
  assume that $\rho'$ is undisplaced, so that it has zero-mean
  quadratures. To obtain the desired Gaussian state, it suffices
  to displace the quadratures associated with the first coordinate
  of each set of coordinates with identical eigenvalues by
  $d_{i}$.
\end{proof}

\begin{theorem}\label{thm: mixed}
Let $\cF=\{(\lambda_{i},k_{i},d_{i})\}_{i=1}^{N}$ be a general family of triples with $\lambda_i$ and $d_i$ real and $k_i$ positive integers.
  Let $(\gamma_j)_{j=1}^{S'}$
  be the non-ascending sequence in which
  $\lambda_{i}$ occurs $k_{i}$ times.  $\cF$ is the family of
  normal parameters of a Gaussian state on $S$ modes iff the
  following conditions hold:
  \begin{itemize}
  \item[0.]  $\lambda_{i}> 0$ and $d_{i}\geq 0$.
  \item[1.] $S'$ is even, $S'=2S$.
  \item[2.]  $\gamma_{j} \gamma_{2S-j+1}\geq 1$.
  \end{itemize}
\end{theorem}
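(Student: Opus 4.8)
The plan is to prove the two directions separately, leaning on the constructions already developed in Lem.~\ref{lem:exdiagcov} and Cor.~\ref{s4cor1} together with the characterization of pure covariance matrices in Lem.~\ref{lem:covbasics}. Conditions 0 and 1 are essentially bookkeeping, so the real content is the pairing inequality in condition 2, which encodes physicality. Throughout I note that $2S-j+1 = 2S+1-j$, so the two index forms coincide.

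For the necessity direction, I would assume $\cF$ is the family of normal parameters of a Gaussian state with covariance matrix $\Gamma$. Condition 0 is immediate: $\Gamma$ is positive definite, so its eigenvalues $\lambda_i$ are strictly positive, and $d_i = |\vec d_i| \geq 0$ by definition of the normal parameters. Condition 1 follows because $\Gamma$ is $2S\times 2S$, hence has exactly $2S$ eigenvalues counted with multiplicity, so $S' = \sum_i k_i = 2S$. Condition 2 is exactly the inequality extracted in the proof of Lem.~\ref{lem:exdiagcov}: writing $\Gamma = A^{T}\Tau A$ with $\Tau \geq I$ yields a dominated pure covariance matrix $\Gamma_p = A^{T} I A \leq \Gamma$, and Weyl's monotonicity principle gives $\gamma_j \geq \gamma_j'$ for the non-ascending eigenvalues of $\Gamma$ and $\Gamma_p$. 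Since $\Gamma_p$ is pure, Lem.~\ref{lem:covbasics} supplies the tight pairing $\gamma_j'\gamma_{2S+1-j}' = 1$, whence $\gamma_j \gamma_{2S+1-j} \geq \gamma_j'\gamma_{2S+1-j}' = 1$.

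For the sufficiency direction, I would build an explicit diagonal representative directly from the $\gamma_j$. Using condition 1 to obtain exactly $2S$ values, I form the diagonal matrix $D$ whose $j$'th $2\times 2$ block has diagonal $(\gamma_j,\gamma_{2S+1-j})$ for $j=1,\ldots,S$. Condition 2 guarantees that each block has product $\geq 1$, so by the squeezed-thermal-state discussion at the start of this section each block, and hence $D$, is the covariance matrix of a valid Gaussian state. As $j$ ranges over $1,\ldots,S$ the index pairs $\{j,2S+1-j\}$ partition $\{1,\ldots,2S\}$, so the multiset of diagonal entries of $D$ is $\{\gamma_1,\ldots,\gamma_{2S}\}$; thus the spectrum of $D$ realizes $\{(\lambda_i,k_i)\}$ with the correct multiplicities, and its eigenspaces are coordinate subspaces. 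Finally, exactly as in Cor.~\ref{s4cor1}, I displace the undisplaced state with covariance $D$ along one coordinate of each eigenspace by the amount $d_i \geq 0$; this produces a Gaussian state whose eigenspace displacements have absolute values $d_i$, so its family of normal parameters is precisely $\cF$.

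The step I expect to require the most care is condition 2, and more conceptually the fact that it is the \emph{only} constraint beyond 0 and 1. One might worry that the bona fide uncertainty relation $\Gamma + iJ \geq 0$ could impose more than the spectral pairing. The resolution, and the crux of the sufficiency argument, is that one is free to choose how to group the $\gamma_j$ into symplectic blocks: pairing $\gamma_j$ with $\gamma_{2S+1-j}$ turns each block into a squeezed thermal mode precisely when $\gamma_j\gamma_{2S+1-j}\geq 1$, so the pairing inequality alone certifies physical realizability. Verifying that the displacement step disturbs neither the spectrum nor the eigenspace assignment is routine but should be stated explicitly, since the normal parameters depend on both.
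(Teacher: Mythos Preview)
Your proof is correct and follows essentially the same route as the paper. The sufficiency argument is identical: build the diagonal covariance matrix with mode blocks $(\gamma_j,\gamma_{2S+1-j})$ and displace as in Cor.~\ref{s4cor1}. For necessity of condition~2 you inline the Weyl monotonicity argument from the proof of Lem.~\ref{lem:exdiagcov}, whereas the paper cites that lemma as a black box and then reasons about the resulting diagonal representative; your version is slightly more direct but not substantively different.
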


\begin{proof}
  Suppose first that the conditions hold. Let $\Gamma$ be the
  diagonal matrix with diagonal entries determined by
  $\Gamma_{2j-1}=\gamma_{j}$ and $\Gamma_{2j}=\gamma_{2S-j+1}$ for
  $j=1,\ldots,S$. As explained at the beginning of this section,
  $\Gamma$ is the covariance matrix of a Gaussian state where the
  $j$'th mode is a squeezed thermal state with temperature
  parameter $\nu_{j}=\sqrt{\gamma_{j}\gamma_{2S-j+1}}$ and
  squeezing parameter $r_{j}=\ln(\gamma_{j}/\nu_{j})/2$.  For the
  state to have the given normal parameters, it suffices to
  displace the quadratures associated with the first coordinate of
  each set of coordinates with identical eigenvalues by $d_{i}$.

  Let $\Gamma$ be the covariance matrix of a Gaussian state.
  Conditions 0. and 1. hold by the definition of the normal
  parameters. We prove that condition 2. holds. By
  Lem.~\ref{lem:exdiagcov}, there exists a diagonal covariance
  matrix $\Tau$ with the same spectrum as $\Gamma$. The diagonal
  block of $\Tau$ corresponding to mode $j$ has diagonal entries
  $\tau_{j}, \tau_{2S-j+1}$ and is the covariance matrix of a
  squeezed thermal state. This implies that
  $\tau_{j}\tau_{2S-j+1}\geq 1$.  By permuting the blocks and
  swapping the pair of quadrature coordinates in a block if necessary, we can assume that
  $\tau_{j}\geq \tau_{2S-j+1}$ and $\tau_{j}$ is non-ascending, which
  implies that the entire sequence $(\tau_{j})_{j=1}^{2S}$ is non-ascending.
  Since this sequence is the spectrum of $\Tau$ and $\Gamma$,
  it follows that $\gamma_{j}=\tau_{j}$ and condition 2. is satisfied. 
\end{proof}

We end this section with a brief discussion of the general
problem of characterizing the set of covariance matrices of Gaussian
states with a given family of normal parameters. We focus on the
case of no displacement, in which case the problem is to
characterize the set $\cG$ of covariance matrices $\Gamma$ of Gaussian states such that $\Gamma$ has a given spectrum, namely the
spectrum entailed by the family of normal parameters. Let $\Gamma_0$
be the diagonal covariance matrix of a Gaussian state with the same spectrum as $\Gamma$, constructed as in
the proof of Cor.~\ref{s4cor1}. Then $\cG$ is the intersection of
the orbit $\cO$ of $\Gamma_0$ under the orthogonal group $O(2S)$ and
the set $\cC$ of covariance matrices of Gaussian states. In general,
$\cG $ is a strict subset of $\cO$. For example, with $S=2$, the two
diagonal matrices $\diag(2,1/2, 4,1)$ and $\diag(1,1/2,4,2)$ are in
the same orbit of $O(2S)$, but the second one is not the covariance
matrix of a Gaussian state, because the first mode, associated with
the first two coordinates, violates the uncertainty principle, which
requires the product of the two diagonal entries to be at least $1$.

The set $\cG$ is a disjoint union of orbits under the group of
orthogonal and symplectic (OS) matrices. Each such orbit is identified
by its squeezing and its thermal spectrum. The results of this section
imply that for $S=1$ or for a pure state ($\det\Gamma_0 = 1$) $\cG$
consists of a single such orbit. In general, there are more orbits.
For example, if the spectrum of $\Gamma_0$ is $(4,3,1,1)$, then the
diagonal matrices $\diag(4,3,1,1)$ and $\diag(4,1,3,1)$ are both in
$\cG$ and have different squeezing spectra and thermal parameters.
Not all OS orbits have representatives that are diagonal.  Examples of
such orbits exist for $S\geq 2$.  Consider $S=2$. It suffices to
exhibit a covariance matrix $\Gamma$ that cannot be diagonalized by an
OS matrix. We construct $\Gamma$ such that its spectrum is
  different from that expected from its temperature and squeezing
  parameters. This prevents diagonalization by an OS matrix because if there exists an OS matrix $A$ such that $D=A\Gamma A^T$ is diagonal,
  then the diagonal of $D$ contains the spectrum and can be arranged to be of the form $(u_1
  e^{2r_1},u_1 e^{-2r_1}, u_2 e^{2r_2}, u_2 e^{-2r_2})$, where $u_1,
  u_2$ are the temperature parameters and $r_1, r_2$ are the squeezing
  parameters of $\Gamma$. To construct $\Gamma$, let $c,s,\tau$ be positive real numbers satisfying
$c^2 - s^2 = 1$ and define
\begin{align}
  \Delta &=
  \begin{pmatrix}
    1+2\tau&0&0&0\\
    0&1+2\tau&0&0\\
    0&0&1&0\\
    0&0&0&1
  \end{pmatrix}
  \nonumber\\
  R &=
  \begin{pmatrix}
    c&0&s&0\\
    0&c&0&-s\\
    s&0&c&0\\
    0&-s&0&c\\
  \end{pmatrix}
  \nonumber\\
  \Gamma &= R^T \Delta R.
\end{align}
Then $R$ is symplectic and $\Delta$ is a thermal, diagonal covariance
matrix. Therefore $\Gamma$ is the covariance matrix of a Gaussian
state. In this case the two temperature parameters are $u_1=1+2\tau$, $u_2=1$ and one can verify that the squeezing parameters are given by
$r_1 = r_2 = \frac{1}{2} \ln(\frac{c+s}{c-s})$ by computing $Q$ in
  the decomposition $R=KQL$, where $K$ and $L$ are OS and $Q$ is diagonal. Because $R$ is symmetric, $K=L^T$, so it suffices to check
  the spectrum of $R$. If $\Gamma$ were OS diagonalizable, because the squeezing and temperature parameters do not change under OS transformations, its
  spectrum would be $((1+2\tau)\frac{c+s}{c-s}$,
  $(1+2\tau)\frac{c-s}{c+s}$, $ \frac{c+s}{c-s}$,
  $\frac{c-s}{c+s})$. For $\tau>0$ and $s>0$, this consists of at
  least three distinct values.  However, direct calculation of the
  spectrum of $\Gamma$ shows that there are only two distinct
  eigenvalues $g_{\pm} = (1+\tau)(c^2 + s^2) \pm
  \sqrt{4(1+\tau)^2c^2s^2 + \tau^2}$, each with multiplicity $2$. We
  conclude that $\Gamma$ is not OS diagonalizable.
Because not all OS orbits have diagonal representatives, an analysis
of squeezing and thermal spectra of members of $\cG$ can not be
reduced to an analysis of diagonal covariance matrices with the given
spectrum.

\section{Conclusion\label{sec:conclusion}}

We investigated the problem of what properties of an arbitrary multimode Gaussian state are determined by the total photon number distribution. We
found that the total photon number distribution determines the spectrum
of the covariance matrix and the absolute displacement within each
eigenspace. For pure states this implies that the distribution determines the
squeezing parameters and the absolute displacement within each
subspace of the phase space where the Gaussian state has the same
amount of squeezing. For one mode in a mixed state, the temperature parameter can also be determined.
In general, we identified representatives for each equivalence
  class of Gaussian states with the same normal parameters and
  characterized the set of normal parameters of Gaussian states.

We established the mathematical relationship between photon number probabilities and the normal parameters consisting of the spectrum and the
displacement of a Gaussian state.  Since the number of normal
parameters is at most four times the number of modes, we expect that
it suffices to know a bounded number of photon number probabilities
to calculate the normal parameters. To make use of the mathematical
relationship in an experimental setting requires an effective way of
computing well-fitting normal parameters from a small number of estimated
photon number probabilities.

\begin{acknowledgments}
  This work includes contributions of the National Institute of
  Standards and Technology, which are not subject to U.S. copyright.
  A. Avagyan acknowledges support from the Professional Research Experience Program (PREP) operated jointly by NIST and the University of Colorado.
  We thank Alex Kwiatkowski and Yi-Kai Liu for helpful comments and suggestions to improve the paper, and Italo Pereira for informing us about relevant literature. Discussions with Thomas Gerrits and Krister Shalm served as a motivation for this study.
\end{acknowledgments}

\bibliography{ghom_gaussian_paper.bib}

\begin{thebibliography}{36}%
\makeatletter
\providecommand \@ifxundefined [1]{%
 \@ifx{#1\undefined}
}%
\providecommand \@ifnum [1]{%
 \ifnum #1\expandafter \@firstoftwo
 \else \expandafter \@secondoftwo
 \fi
}%
\providecommand \@ifx [1]{%
 \ifx #1\expandafter \@firstoftwo
 \else \expandafter \@secondoftwo
 \fi
}%
\providecommand \natexlab [1]{#1}%
\providecommand \enquote  [1]{``#1''}%
\providecommand \bibnamefont  [1]{#1}%
\providecommand \bibfnamefont [1]{#1}%
\providecommand \citenamefont [1]{#1}%
\providecommand \href@noop [0]{\@secondoftwo}%
\providecommand \href [0]{\begingroup \@sanitize@url \@href}%
\providecommand \@href[1]{\@@startlink{#1}\@@href}%
\providecommand \@@href[1]{\endgroup#1\@@endlink}%
\providecommand \@sanitize@url [0]{\catcode `\\12\catcode `\$12\catcode
  `\&12\catcode `\#12\catcode `\^12\catcode `\_12\catcode `\%12\relax}%
\providecommand \@@startlink[1]{}%
\providecommand \@@endlink[0]{}%
\providecommand \url  [0]{\begingroup\@sanitize@url \@url }%
\providecommand \@url [1]{\endgroup\@href {#1}{\urlprefix }}%
\providecommand \urlprefix  [0]{URL }%
\providecommand \Eprint [0]{\href }%
\providecommand \doibase [0]{https://doi.org/}%
\providecommand \selectlanguage [0]{\@gobble}%
\providecommand \bibinfo  [0]{\@secondoftwo}%
\providecommand \bibfield  [0]{\@secondoftwo}%
\providecommand \translation [1]{[#1]}%
\providecommand \BibitemOpen [0]{}%
\providecommand \bibitemStop [0]{}%
\providecommand \bibitemNoStop [0]{.\EOS\space}%
\providecommand \EOS [0]{\spacefactor3000\relax}%
\providecommand \BibitemShut  [1]{\csname bibitem#1\endcsname}%
\let\auto@bib@innerbib\@empty
\bibitem [{\citenamefont {Wang}\ \emph {et~al.}(2007)\citenamefont {Wang},
  \citenamefont {Hiroshima}, \citenamefont {Tomita},\ and\ \citenamefont
  {Hayashi}}]{wang2007quantum}%
  \BibitemOpen
  \bibfield  {author} {\bibinfo {author} {\bibfnamefont {X.-B.}\ \bibnamefont
  {Wang}}, \bibinfo {author} {\bibfnamefont {T.}~\bibnamefont {Hiroshima}},
  \bibinfo {author} {\bibfnamefont {A.}~\bibnamefont {Tomita}},\ and\ \bibinfo
  {author} {\bibfnamefont {M.}~\bibnamefont {Hayashi}},\ }\bibfield  {title}
  {\bibinfo {title} {Quantum information with {G}aussian states},\ }\href
  {https://doi.org/https://doi.org/10.1016/j.physrep.2007.04.005} {\bibfield
  {journal} {\bibinfo  {journal} {Physics Reports}\ }\textbf {\bibinfo {volume}
  {448}},\ \bibinfo {pages} {1} (\bibinfo {year} {2007})}\BibitemShut {NoStop}%
\bibitem [{\citenamefont {Weedbrook}\ \emph {et~al.}(2012)\citenamefont
  {Weedbrook}, \citenamefont {Pirandola}, \citenamefont {Garcia-Patron},
  \citenamefont {Cerf}, \citenamefont {Ralph}, \citenamefont {Shapiro},\ and\
  \citenamefont {Lloyd}}]{weedbrook:qc2012a}%
  \BibitemOpen
  \bibfield  {author} {\bibinfo {author} {\bibfnamefont {C.}~\bibnamefont
  {Weedbrook}}, \bibinfo {author} {\bibfnamefont {S.}~\bibnamefont
  {Pirandola}}, \bibinfo {author} {\bibfnamefont {R.}~\bibnamefont
  {Garcia-Patron}}, \bibinfo {author} {\bibfnamefont {N.~J.}\ \bibnamefont
  {Cerf}}, \bibinfo {author} {\bibfnamefont {T.~C.}\ \bibnamefont {Ralph}},
  \bibinfo {author} {\bibfnamefont {J.~H.}\ \bibnamefont {Shapiro}},\ and\
  \bibinfo {author} {\bibfnamefont {S.}~\bibnamefont {Lloyd}},\ }\bibfield
  {title} {\bibinfo {title} {Gaussian quantum information},\ }\href
  {https://doi.org/10.1103/RevModPhys.84.621} {\bibfield  {journal} {\bibinfo
  {journal} {Rev. Mod. Phys.}\ }\textbf {\bibinfo {volume} {84}},\ \bibinfo
  {pages} {621} (\bibinfo {year} {2012})}\BibitemShut {NoStop}%
\bibitem [{\citenamefont {Paris}\ \emph {et~al.}(2003)\citenamefont {Paris},
  \citenamefont {Illuminati}, \citenamefont {Serafini},\ and\ \citenamefont
  {De~Siena}}]{paris2003purity}%
  \BibitemOpen
  \bibfield  {author} {\bibinfo {author} {\bibfnamefont {M.~G.~A.}\
  \bibnamefont {Paris}}, \bibinfo {author} {\bibfnamefont {F.}~\bibnamefont
  {Illuminati}}, \bibinfo {author} {\bibfnamefont {A.}~\bibnamefont
  {Serafini}},\ and\ \bibinfo {author} {\bibfnamefont {S.}~\bibnamefont
  {De~Siena}},\ }\bibfield  {title} {\bibinfo {title} {Purity of {G}aussian
  states: Measurement schemes and time evolution in noisy channels},\ }\href
  {https://doi.org/10.1103/PhysRevA.68.012314} {\bibfield  {journal} {\bibinfo
  {journal} {Phys. Rev. A}\ }\textbf {\bibinfo {volume} {68}},\ \bibinfo
  {pages} {012314} (\bibinfo {year} {2003})}\BibitemShut {NoStop}%
\bibitem [{\citenamefont {Laurat}\ \emph {et~al.}(2005)\citenamefont {Laurat},
  \citenamefont {Keller}, \citenamefont {Oliveira-Huguenin}, \citenamefont
  {Fabre}, \citenamefont {Coudreau}, \citenamefont {Serafini}, \citenamefont
  {Adesso},\ and\ \citenamefont {Illuminati}}]{laurat2005entanglement}%
  \BibitemOpen
  \bibfield  {author} {\bibinfo {author} {\bibfnamefont {J.}~\bibnamefont
  {Laurat}}, \bibinfo {author} {\bibfnamefont {G.}~\bibnamefont {Keller}},
  \bibinfo {author} {\bibfnamefont {J.~A.}\ \bibnamefont {Oliveira-Huguenin}},
  \bibinfo {author} {\bibfnamefont {C.}~\bibnamefont {Fabre}}, \bibinfo
  {author} {\bibfnamefont {T.}~\bibnamefont {Coudreau}}, \bibinfo {author}
  {\bibfnamefont {A.}~\bibnamefont {Serafini}}, \bibinfo {author}
  {\bibfnamefont {G.}~\bibnamefont {Adesso}},\ and\ \bibinfo {author}
  {\bibfnamefont {F.}~\bibnamefont {Illuminati}},\ }\bibfield  {title}
  {\bibinfo {title} {Entanglement of two-mode {G}aussian states:
  Characterization and experimental production and manipulation},\ }\href
  {https://doi.org/10.1088/1464-4266/7/12/021} {\bibfield  {journal} {\bibinfo
  {journal} {Journal of Optics B: Quantum and Semiclassical Optics}\ }\textbf
  {\bibinfo {volume} {7}},\ \bibinfo {pages} {S577} (\bibinfo {year}
  {2005})}\BibitemShut {NoStop}%
\bibitem [{\citenamefont {D'Auria}\ \emph {et~al.}(2005)\citenamefont
  {D'Auria}, \citenamefont {Porzio}, \citenamefont {Solimeno}, \citenamefont
  {Olivares},\ and\ \citenamefont {Paris}}]{dauria2005characterization}%
  \BibitemOpen
  \bibfield  {author} {\bibinfo {author} {\bibfnamefont {V.}~\bibnamefont
  {D'Auria}}, \bibinfo {author} {\bibfnamefont {A.}~\bibnamefont {Porzio}},
  \bibinfo {author} {\bibfnamefont {S.}~\bibnamefont {Solimeno}}, \bibinfo
  {author} {\bibfnamefont {S.}~\bibnamefont {Olivares}},\ and\ \bibinfo
  {author} {\bibfnamefont {M.~G.~A.}\ \bibnamefont {Paris}},\ }\bibfield
  {title} {\bibinfo {title} {Characterization of bipartite states using a
  single homodyne detector},\ }\href
  {https://doi.org/10.1088/1464-4266/7/12/044} {\bibfield  {journal} {\bibinfo
  {journal} {Journal of Optics B: Quantum and Semiclassical Optics}\ }\textbf
  {\bibinfo {volume} {7}},\ \bibinfo {pages} {S750} (\bibinfo {year}
  {2005})}\BibitemShut {NoStop}%
\bibitem [{\citenamefont {Porzio}\ \emph {et~al.}(2007)\citenamefont {Porzio},
  \citenamefont {D'auria}, \citenamefont {Solimeno}, \citenamefont {Olivares},\
  and\ \citenamefont {Paris}}]{porzio2007homodyne}%
  \BibitemOpen
  \bibfield  {author} {\bibinfo {author} {\bibfnamefont {A.}~\bibnamefont
  {Porzio}}, \bibinfo {author} {\bibfnamefont {V.}~\bibnamefont {D'auria}},
  \bibinfo {author} {\bibfnamefont {S.}~\bibnamefont {Solimeno}}, \bibinfo
  {author} {\bibfnamefont {S.}~\bibnamefont {Olivares}},\ and\ \bibinfo
  {author} {\bibfnamefont {M.~G.~A.}\ \bibnamefont {Paris}},\ }\bibfield
  {title} {\bibinfo {title} {Homodyne characterization of continuous variable
  bipartite states},\ }\href {https://doi.org/10.1142/S0219749907002529}
  {\bibfield  {journal} {\bibinfo  {journal} {International Journal of Quantum
  Information}\ }\textbf {\bibinfo {volume} {05}},\ \bibinfo {pages} {63}
  (\bibinfo {year} {2007})}\BibitemShut {NoStop}%
\bibitem [{\citenamefont {\ifmmode \check{R}\else
  \v{R}\fi{}eh\'a\ifmmode~\check{c}\else \v{c}\fi{}ek}\ \emph
  {et~al.}(2009)\citenamefont {\ifmmode \check{R}\else
  \v{R}\fi{}eh\'a\ifmmode~\check{c}\else \v{c}\fi{}ek}, \citenamefont
  {Olivares}, \citenamefont {Mogilevtsev}, \citenamefont {Hradil},
  \citenamefont {Paris}, \citenamefont {Fornaro}, \citenamefont {D'Auria},
  \citenamefont {Porzio},\ and\ \citenamefont
  {Solimeno}}]{rehacek2009effective}%
  \BibitemOpen
  \bibfield  {author} {\bibinfo {author} {\bibfnamefont {J.}~\bibnamefont
  {\ifmmode \check{R}\else \v{R}\fi{}eh\'a\ifmmode~\check{c}\else
  \v{c}\fi{}ek}}, \bibinfo {author} {\bibfnamefont {S.}~\bibnamefont
  {Olivares}}, \bibinfo {author} {\bibfnamefont {D.}~\bibnamefont
  {Mogilevtsev}}, \bibinfo {author} {\bibfnamefont {Z.}~\bibnamefont {Hradil}},
  \bibinfo {author} {\bibfnamefont {M.~G.~A.}\ \bibnamefont {Paris}}, \bibinfo
  {author} {\bibfnamefont {S.}~\bibnamefont {Fornaro}}, \bibinfo {author}
  {\bibfnamefont {V.}~\bibnamefont {D'Auria}}, \bibinfo {author} {\bibfnamefont
  {A.}~\bibnamefont {Porzio}},\ and\ \bibinfo {author} {\bibfnamefont
  {S.}~\bibnamefont {Solimeno}},\ }\bibfield  {title} {\bibinfo {title}
  {Effective method to estimate multidimensional {G}aussian states},\ }\href
  {https://doi.org/10.1103/PhysRevA.79.032111} {\bibfield  {journal} {\bibinfo
  {journal} {Phys. Rev. A}\ }\textbf {\bibinfo {volume} {79}},\ \bibinfo
  {pages} {032111} (\bibinfo {year} {2009})}\BibitemShut {NoStop}%
\bibitem [{\citenamefont {D'Auria}\ \emph {et~al.}(2009)\citenamefont
  {D'Auria}, \citenamefont {Fornaro}, \citenamefont {Porzio}, \citenamefont
  {Solimeno}, \citenamefont {Olivares},\ and\ \citenamefont
  {Paris}}]{dauria2009full}%
  \BibitemOpen
  \bibfield  {author} {\bibinfo {author} {\bibfnamefont {V.}~\bibnamefont
  {D'Auria}}, \bibinfo {author} {\bibfnamefont {S.}~\bibnamefont {Fornaro}},
  \bibinfo {author} {\bibfnamefont {A.}~\bibnamefont {Porzio}}, \bibinfo
  {author} {\bibfnamefont {S.}~\bibnamefont {Solimeno}}, \bibinfo {author}
  {\bibfnamefont {S.}~\bibnamefont {Olivares}},\ and\ \bibinfo {author}
  {\bibfnamefont {M.~G.~A.}\ \bibnamefont {Paris}},\ }\bibfield  {title}
  {\bibinfo {title} {Full characterization of {G}aussian bipartite entangled
  states by a single homodyne detector},\ }\href
  {https://doi.org/10.1103/PhysRevLett.102.020502} {\bibfield  {journal}
  {\bibinfo  {journal} {Phys. Rev. Lett.}\ }\textbf {\bibinfo {volume} {102}},\
  \bibinfo {pages} {020502} (\bibinfo {year} {2009})}\BibitemShut {NoStop}%
\bibitem [{\citenamefont {Paternostro}\ \emph {et~al.}(2009)\citenamefont
  {Paternostro}, \citenamefont {Jeong},\ and\ \citenamefont
  {Ralph}}]{paternostro:qc2009a}%
  \BibitemOpen
  \bibfield  {author} {\bibinfo {author} {\bibfnamefont {M.}~\bibnamefont
  {Paternostro}}, \bibinfo {author} {\bibfnamefont {H.}~\bibnamefont {Jeong}},\
  and\ \bibinfo {author} {\bibfnamefont {T.~C.}\ \bibnamefont {Ralph}},\
  }\bibfield  {title} {\bibinfo {title} {Violations of {Bell}’s inequality
  for {Gaussian} states with homodyne detection and nonlinear interactions},\
  }\href {https://doi.org/10.1103/PhysRevA.79.012101} {\bibfield  {journal}
  {\bibinfo  {journal} {Phys. Rev. A}\ }\textbf {\bibinfo {volume} {79}},\
  \bibinfo {pages} {012101/1} (\bibinfo {year} {2009})}\BibitemShut {NoStop}%
\bibitem [{\citenamefont {Buono}\ \emph {et~al.}(2010)\citenamefont {Buono},
  \citenamefont {Nocerino}, \citenamefont {D'Auria}, \citenamefont {Porzio},
  \citenamefont {Olivares},\ and\ \citenamefont {Paris}}]{buono2010quantum}%
  \BibitemOpen
  \bibfield  {author} {\bibinfo {author} {\bibfnamefont {D.}~\bibnamefont
  {Buono}}, \bibinfo {author} {\bibfnamefont {G.}~\bibnamefont {Nocerino}},
  \bibinfo {author} {\bibfnamefont {V.}~\bibnamefont {D'Auria}}, \bibinfo
  {author} {\bibfnamefont {A.}~\bibnamefont {Porzio}}, \bibinfo {author}
  {\bibfnamefont {S.}~\bibnamefont {Olivares}},\ and\ \bibinfo {author}
  {\bibfnamefont {M.~G.~A.}\ \bibnamefont {Paris}},\ }\bibfield  {title}
  {\bibinfo {title} {Quantum characterization of bipartite {G}aussian states},\
  }\href {http://josab.osa.org/abstract.cfm?URI=josab-27-6-A110} {\bibfield
  {journal} {\bibinfo  {journal} {J. Opt. Soc. Am. B}\ }\textbf {\bibinfo
  {volume} {27}},\ \bibinfo {pages} {A110} (\bibinfo {year}
  {2010})}\BibitemShut {NoStop}%
\bibitem [{\citenamefont {Blandino}\ \emph {et~al.}(2012)\citenamefont
  {Blandino}, \citenamefont {Genoni}, \citenamefont {Etesse}, \citenamefont
  {Barbieri}, \citenamefont {Paris}, \citenamefont {Grangier},\ and\
  \citenamefont {Tualle-Brouri}}]{blandino:qc2012a}%
  \BibitemOpen
  \bibfield  {author} {\bibinfo {author} {\bibfnamefont {R.}~\bibnamefont
  {Blandino}}, \bibinfo {author} {\bibfnamefont {M.~G.}\ \bibnamefont
  {Genoni}}, \bibinfo {author} {\bibfnamefont {J.}~\bibnamefont {Etesse}},
  \bibinfo {author} {\bibfnamefont {M.}~\bibnamefont {Barbieri}}, \bibinfo
  {author} {\bibfnamefont {M.~G.~A.}\ \bibnamefont {Paris}}, \bibinfo {author}
  {\bibfnamefont {P.}~\bibnamefont {Grangier}},\ and\ \bibinfo {author}
  {\bibfnamefont {R.}~\bibnamefont {Tualle-Brouri}},\ }\bibfield  {title}
  {\bibinfo {title} {Homodyne estimation of gaussian quantum discord},\ }\href
  {https://doi.org/10.1103/PhysRevLett.109.180402} {\bibfield  {journal}
  {\bibinfo  {journal} {Phys. Rev. Lett.}\ }\textbf {\bibinfo {volume} {109}},\
  \bibinfo {pages} {180402} (\bibinfo {year} {2012})}\BibitemShut {NoStop}%
\bibitem [{\citenamefont {Esposito}\ \emph {et~al.}(2014)\citenamefont
  {Esposito}, \citenamefont {Benatti}, \citenamefont {Floreanini},
  \citenamefont {Olivares}, \citenamefont {Randi}, \citenamefont {Titimbo},
  \citenamefont {Pividori}, \citenamefont {Novelli}, \citenamefont {Cilento},
  \citenamefont {Parmigiani},\ and\ \citenamefont
  {Fausti}}]{esposito2014pulsed}%
  \BibitemOpen
  \bibfield  {author} {\bibinfo {author} {\bibfnamefont {M.}~\bibnamefont
  {Esposito}}, \bibinfo {author} {\bibfnamefont {F.}~\bibnamefont {Benatti}},
  \bibinfo {author} {\bibfnamefont {R.}~\bibnamefont {Floreanini}}, \bibinfo
  {author} {\bibfnamefont {S.}~\bibnamefont {Olivares}}, \bibinfo {author}
  {\bibfnamefont {F.}~\bibnamefont {Randi}}, \bibinfo {author} {\bibfnamefont
  {K.}~\bibnamefont {Titimbo}}, \bibinfo {author} {\bibfnamefont
  {M.}~\bibnamefont {Pividori}}, \bibinfo {author} {\bibfnamefont
  {F.}~\bibnamefont {Novelli}}, \bibinfo {author} {\bibfnamefont
  {F.}~\bibnamefont {Cilento}}, \bibinfo {author} {\bibfnamefont
  {F.}~\bibnamefont {Parmigiani}},\ and\ \bibinfo {author} {\bibfnamefont
  {D.}~\bibnamefont {Fausti}},\ }\bibfield  {title} {\bibinfo {title} {Pulsed
  homodyne {G}aussian quantum tomography with low detection efficiency},\
  }\href {https://doi.org/10.1088/1367-2630/16/4/043004} {\bibfield  {journal}
  {\bibinfo  {journal} {New Journal of Physics}\ }\textbf {\bibinfo {volume}
  {16}},\ \bibinfo {pages} {043004} (\bibinfo {year} {2014})}\BibitemShut
  {NoStop}%
\bibitem [{\citenamefont {Fiur\'a\ifmmode~\check{s}\else \v{s}\fi{}ek}\ and\
  \citenamefont {Cerf}(2004)}]{fiurasek2004how}%
  \BibitemOpen
  \bibfield  {author} {\bibinfo {author} {\bibfnamefont {J.}~\bibnamefont
  {Fiur\'a\ifmmode~\check{s}\else \v{s}\fi{}ek}}\ and\ \bibinfo {author}
  {\bibfnamefont {N.~J.}\ \bibnamefont {Cerf}},\ }\bibfield  {title} {\bibinfo
  {title} {How to measure squeezing and entanglement of {G}aussian states
  without homodyning},\ }\href {https://doi.org/10.1103/PhysRevLett.93.063601}
  {\bibfield  {journal} {\bibinfo  {journal} {Phys. Rev. Lett.}\ }\textbf
  {\bibinfo {volume} {93}},\ \bibinfo {pages} {063601} (\bibinfo {year}
  {2004})}\BibitemShut {NoStop}%
\bibitem [{\citenamefont {Wenger}\ \emph {et~al.}(2004)\citenamefont {Wenger},
  \citenamefont {Fiur\'a\ifmmode~\check{s}\else \v{s}\fi{}ek}, \citenamefont
  {Tualle-Brouri}, \citenamefont {Cerf},\ and\ \citenamefont
  {Grangier}}]{wenger2004pulsed}%
  \BibitemOpen
  \bibfield  {author} {\bibinfo {author} {\bibfnamefont {J.}~\bibnamefont
  {Wenger}}, \bibinfo {author} {\bibfnamefont {J.}~\bibnamefont
  {Fiur\'a\ifmmode~\check{s}\else \v{s}\fi{}ek}}, \bibinfo {author}
  {\bibfnamefont {R.}~\bibnamefont {Tualle-Brouri}}, \bibinfo {author}
  {\bibfnamefont {N.~J.}\ \bibnamefont {Cerf}},\ and\ \bibinfo {author}
  {\bibfnamefont {P.}~\bibnamefont {Grangier}},\ }\bibfield  {title} {\bibinfo
  {title} {Pulsed squeezed vacuum measurements without homodyning},\ }\href
  {https://doi.org/10.1103/PhysRevA.70.053812} {\bibfield  {journal} {\bibinfo
  {journal} {Phys. Rev. A}\ }\textbf {\bibinfo {volume} {70}},\ \bibinfo
  {pages} {053812} (\bibinfo {year} {2004})}\BibitemShut {NoStop}%
\bibitem [{\citenamefont {Wallentowitz}\ and\ \citenamefont
  {Vogel}(1996)}]{wallentowitz1996unbalanced}%
  \BibitemOpen
  \bibfield  {author} {\bibinfo {author} {\bibfnamefont {S.}~\bibnamefont
  {Wallentowitz}}\ and\ \bibinfo {author} {\bibfnamefont {W.}~\bibnamefont
  {Vogel}},\ }\bibfield  {title} {\bibinfo {title} {Unbalanced homodyning for
  quantum state measurements},\ }\href
  {https://doi.org/10.1103/PhysRevA.53.4528} {\bibfield  {journal} {\bibinfo
  {journal} {Phys. Rev. A}\ }\textbf {\bibinfo {volume} {53}},\ \bibinfo
  {pages} {4528} (\bibinfo {year} {1996})}\BibitemShut {NoStop}%
\bibitem [{\citenamefont {Banaszek}\ and\ \citenamefont
  {W\'odkiewicz}(1996)}]{konrad1996direct}%
  \BibitemOpen
  \bibfield  {author} {\bibinfo {author} {\bibfnamefont {K.}~\bibnamefont
  {Banaszek}}\ and\ \bibinfo {author} {\bibfnamefont {K.}~\bibnamefont
  {W\'odkiewicz}},\ }\bibfield  {title} {\bibinfo {title} {Direct probing of
  quantum phase space by photon counting},\ }\href
  {https://doi.org/10.1103/PhysRevLett.76.4344} {\bibfield  {journal} {\bibinfo
   {journal} {Phys. Rev. Lett.}\ }\textbf {\bibinfo {volume} {76}},\ \bibinfo
  {pages} {4344} (\bibinfo {year} {1996})}\BibitemShut {NoStop}%
\bibitem [{\citenamefont {Man'ko}\ and\ \citenamefont
  {Man'ko}(2003)}]{manko2003photon}%
  \BibitemOpen
  \bibfield  {author} {\bibinfo {author} {\bibfnamefont {O.}~\bibnamefont
  {Man'ko}}\ and\ \bibinfo {author} {\bibfnamefont {V.~I.}\ \bibnamefont
  {Man'ko}},\ }\bibfield  {title} {\bibinfo {title} {Photon-number tomography
  of multimode states and positivity of the density matrix},\ }\href
  {https://doi.org/10.1023/A:1025876210639} {\bibfield  {journal} {\bibinfo
  {journal} {Journal of Russian Laser Research}\ }\textbf {\bibinfo {volume}
  {24}},\ \bibinfo {pages} {497} (\bibinfo {year} {2003})}\BibitemShut
  {NoStop}%
\bibitem [{\citenamefont {Man'ko}\ and\ \citenamefont
  {Man'ko}(2009)}]{manko2009photon}%
  \BibitemOpen
  \bibfield  {author} {\bibinfo {author} {\bibfnamefont {O.~V.}\ \bibnamefont
  {Man'ko}}\ and\ \bibinfo {author} {\bibfnamefont {V.~I.}\ \bibnamefont
  {Man'ko}},\ }\bibfield  {title} {\bibinfo {title} {Photon number and optical
  tomograms for {G}aussian states},\ }\href
  {https://doi.org/10.1134/S1054660X09150286} {\bibfield  {journal} {\bibinfo
  {journal} {Laser Physics}\ }\textbf {\bibinfo {volume} {19}},\ \bibinfo
  {pages} {1804} (\bibinfo {year} {2009})}\BibitemShut {NoStop}%
\bibitem [{\citenamefont {Leibfried}\ \emph {et~al.}(1996)\citenamefont
  {Leibfried}, \citenamefont {Meekhof}, \citenamefont {King}, \citenamefont
  {Monroe}, \citenamefont {Itano},\ and\ \citenamefont
  {Wineland}}]{Leibfried1996exp}%
  \BibitemOpen
  \bibfield  {author} {\bibinfo {author} {\bibfnamefont {D.}~\bibnamefont
  {Leibfried}}, \bibinfo {author} {\bibfnamefont {D.~M.}\ \bibnamefont
  {Meekhof}}, \bibinfo {author} {\bibfnamefont {B.~E.}\ \bibnamefont {King}},
  \bibinfo {author} {\bibfnamefont {C.}~\bibnamefont {Monroe}}, \bibinfo
  {author} {\bibfnamefont {W.~M.}\ \bibnamefont {Itano}},\ and\ \bibinfo
  {author} {\bibfnamefont {D.~J.}\ \bibnamefont {Wineland}},\ }\bibfield
  {title} {\bibinfo {title} {Experimental determination of the motional quantum
  state of a trapped atom},\ }\href
  {https://doi.org/10.1103/PhysRevLett.77.4281} {\bibfield  {journal} {\bibinfo
   {journal} {Phys. Rev. Lett.}\ }\textbf {\bibinfo {volume} {77}},\ \bibinfo
  {pages} {4281} (\bibinfo {year} {1996})}\BibitemShut {NoStop}%
\bibitem [{\citenamefont {Nogues}\ \emph {et~al.}(2000)\citenamefont {Nogues},
  \citenamefont {Rauschenbeutel}, \citenamefont {Osnaghi}, \citenamefont
  {Bertet}, \citenamefont {Brune}, \citenamefont {Raimond}, \citenamefont
  {Haroche}, \citenamefont {Lutterbach},\ and\ \citenamefont
  {Davidovich}}]{nogues2000meas}%
  \BibitemOpen
  \bibfield  {author} {\bibinfo {author} {\bibfnamefont {G.}~\bibnamefont
  {Nogues}}, \bibinfo {author} {\bibfnamefont {A.}~\bibnamefont
  {Rauschenbeutel}}, \bibinfo {author} {\bibfnamefont {S.}~\bibnamefont
  {Osnaghi}}, \bibinfo {author} {\bibfnamefont {P.}~\bibnamefont {Bertet}},
  \bibinfo {author} {\bibfnamefont {M.}~\bibnamefont {Brune}}, \bibinfo
  {author} {\bibfnamefont {J.~M.}\ \bibnamefont {Raimond}}, \bibinfo {author}
  {\bibfnamefont {S.}~\bibnamefont {Haroche}}, \bibinfo {author} {\bibfnamefont
  {L.~G.}\ \bibnamefont {Lutterbach}},\ and\ \bibinfo {author} {\bibfnamefont
  {L.}~\bibnamefont {Davidovich}},\ }\bibfield  {title} {\bibinfo {title}
  {Measurement of a negative value for the {W}igner function of radiation},\
  }\href {https://doi.org/10.1103/PhysRevA.62.054101} {\bibfield  {journal}
  {\bibinfo  {journal} {Phys. Rev. A}\ }\textbf {\bibinfo {volume} {62}},\
  \bibinfo {pages} {054101} (\bibinfo {year} {2000})}\BibitemShut {NoStop}%
\bibitem [{\citenamefont {Wu}\ \emph {et~al.}(2019)\citenamefont {Wu},
  \citenamefont {Toda},\ and\ \citenamefont {Hofmann}}]{wu2019quantum}%
  \BibitemOpen
  \bibfield  {author} {\bibinfo {author} {\bibfnamefont {J.-Y.}\ \bibnamefont
  {Wu}}, \bibinfo {author} {\bibfnamefont {N.}~\bibnamefont {Toda}},\ and\
  \bibinfo {author} {\bibfnamefont {H.~F.}\ \bibnamefont {Hofmann}},\
  }\bibfield  {title} {\bibinfo {title} {Quantum enhancement of sensitivity
  achieved by photon-number-resolving detection in the dark port of a two-path
  interferometer operating at high intensities},\ }\href
  {https://doi.org/10.1103/PhysRevA.100.013814} {\bibfield  {journal} {\bibinfo
   {journal} {Phys. Rev. A}\ }\textbf {\bibinfo {volume} {100}},\ \bibinfo
  {pages} {013814} (\bibinfo {year} {2019})}\BibitemShut {NoStop}%
\bibitem [{\citenamefont {Burd}\ \emph {et~al.}(2019)\citenamefont {Burd},
  \citenamefont {Srinivas}, \citenamefont {Bollinger}, \citenamefont {Wilson},
  \citenamefont {Wineland}, \citenamefont {Leibfried}, \citenamefont
  {Slichter},\ and\ \citenamefont {Allcock}}]{burd2019quantum}%
  \BibitemOpen
  \bibfield  {author} {\bibinfo {author} {\bibfnamefont {S.~C.}\ \bibnamefont
  {Burd}}, \bibinfo {author} {\bibfnamefont {R.}~\bibnamefont {Srinivas}},
  \bibinfo {author} {\bibfnamefont {J.~J.}\ \bibnamefont {Bollinger}}, \bibinfo
  {author} {\bibfnamefont {A.~C.}\ \bibnamefont {Wilson}}, \bibinfo {author}
  {\bibfnamefont {D.~J.}\ \bibnamefont {Wineland}}, \bibinfo {author}
  {\bibfnamefont {D.}~\bibnamefont {Leibfried}}, \bibinfo {author}
  {\bibfnamefont {D.~H.}\ \bibnamefont {Slichter}},\ and\ \bibinfo {author}
  {\bibfnamefont {D.~T.~C.}\ \bibnamefont {Allcock}},\ }\bibfield  {title}
  {\bibinfo {title} {Quantum amplification of mechanical oscillator motion},\
  }\href {https://doi.org/10.1126/science.aaw2884} {\bibfield  {journal}
  {\bibinfo  {journal} {Science}\ }\textbf {\bibinfo {volume} {364}},\ \bibinfo
  {pages} {1163} (\bibinfo {year} {2019})}\BibitemShut {NoStop}%
\bibitem [{\citenamefont {Bezerra}\ \emph {et~al.}(2021)\citenamefont
  {Bezerra}, \citenamefont {Vasconcelos},\ and\ \citenamefont
  {Glancy}}]{bezerra2021quadrature}%
  \BibitemOpen
  \bibfield  {author} {\bibinfo {author} {\bibfnamefont {I.~P.}\ \bibnamefont
  {Bezerra}}, \bibinfo {author} {\bibfnamefont {H.~M.}\ \bibnamefont
  {Vasconcelos}},\ and\ \bibinfo {author} {\bibfnamefont {S.}~\bibnamefont
  {Glancy}},\ }\href {https://doi.org/10.48550/ARXIV.2111.04923} {\bibinfo
  {title} {Quadrature squeezing and temperature estimation from the {F}ock
  distribution}} (\bibinfo {year} {2021})\BibitemShut {NoStop}%
\bibitem [{\citenamefont {Zuo}\ \emph {et~al.}(2022)\citenamefont {Zuo},
  \citenamefont {Zhang}, \citenamefont {Li}, \citenamefont {Zhu}, \citenamefont
  {Guo},\ and\ \citenamefont {Zhang}}]{zuo2022determination}%
  \BibitemOpen
  \bibfield  {author} {\bibinfo {author} {\bibfnamefont {G.}~\bibnamefont
  {Zuo}}, \bibinfo {author} {\bibfnamefont {Y.}~\bibnamefont {Zhang}}, \bibinfo
  {author} {\bibfnamefont {J.}~\bibnamefont {Li}}, \bibinfo {author}
  {\bibfnamefont {S.}~\bibnamefont {Zhu}}, \bibinfo {author} {\bibfnamefont
  {Y.}~\bibnamefont {Guo}},\ and\ \bibinfo {author} {\bibfnamefont
  {T.}~\bibnamefont {Zhang}},\ }\bibfield  {title} {\bibinfo {title}
  {Determination of weakly squeezed vacuum states through photon statistics
  measurement},\ }\href
  {https://doi.org/https://doi.org/10.1016/j.physleta.2022.128133} {\bibfield
  {journal} {\bibinfo  {journal} {Physics Letters A}\ }\textbf {\bibinfo
  {volume} {439}},\ \bibinfo {pages} {128133} (\bibinfo {year}
  {2022})}\BibitemShut {NoStop}%
\bibitem [{\citenamefont {Burenkov}\ \emph {et~al.}(2017)\citenamefont
  {Burenkov}, \citenamefont {Sharma}, \citenamefont {Gerrits}, \citenamefont
  {Harder}, \citenamefont {Bartley}, \citenamefont {Silberhorn}, \citenamefont
  {Goldschmidt},\ and\ \citenamefont {Polyakov}}]{burenkov2017full}%
  \BibitemOpen
  \bibfield  {author} {\bibinfo {author} {\bibfnamefont {I.~A.}\ \bibnamefont
  {Burenkov}}, \bibinfo {author} {\bibfnamefont {A.~K.}\ \bibnamefont
  {Sharma}}, \bibinfo {author} {\bibfnamefont {T.}~\bibnamefont {Gerrits}},
  \bibinfo {author} {\bibfnamefont {G.}~\bibnamefont {Harder}}, \bibinfo
  {author} {\bibfnamefont {T.~J.}\ \bibnamefont {Bartley}}, \bibinfo {author}
  {\bibfnamefont {C.}~\bibnamefont {Silberhorn}}, \bibinfo {author}
  {\bibfnamefont {E.~A.}\ \bibnamefont {Goldschmidt}},\ and\ \bibinfo {author}
  {\bibfnamefont {S.~V.}\ \bibnamefont {Polyakov}},\ }\bibfield  {title}
  {\bibinfo {title} {Full statistical mode reconstruction of a light field via
  a photon-number-resolved measurement},\ }\href
  {https://doi.org/10.1103/PhysRevA.95.053806} {\bibfield  {journal} {\bibinfo
  {journal} {Phys. Rev. A}\ }\textbf {\bibinfo {volume} {95}},\ \bibinfo
  {pages} {053806} (\bibinfo {year} {2017})}\BibitemShut {NoStop}%
\bibitem [{\citenamefont {Parthasarathy}\ and\ \citenamefont
  {Sengupta}(2015)}]{parthasarathy2015from}%
  \BibitemOpen
  \bibfield  {author} {\bibinfo {author} {\bibfnamefont {K.~R.}\ \bibnamefont
  {Parthasarathy}}\ and\ \bibinfo {author} {\bibfnamefont {R.}~\bibnamefont
  {Sengupta}},\ }\bibfield  {title} {\bibinfo {title} {From particle counting
  to {G}aussian tomography},\ }\href
  {https://doi.org/10.1142/S021902571550023X} {\bibfield  {journal} {\bibinfo
  {journal} {Infinite Dimensional Analysis, Quantum Probability and Related
  Topics}\ }\textbf {\bibinfo {volume} {18}},\ \bibinfo {pages} {1550023}
  (\bibinfo {year} {2015})}\BibitemShut {NoStop}%
\bibitem [{\citenamefont {Kumar}\ \emph {et~al.}(2020)\citenamefont {Kumar},
  \citenamefont {Sengupta},\ and\ \citenamefont {Arvind}}]{kumar2020optimal}%
  \BibitemOpen
  \bibfield  {author} {\bibinfo {author} {\bibfnamefont {C.}~\bibnamefont
  {Kumar}}, \bibinfo {author} {\bibfnamefont {R.}~\bibnamefont {Sengupta}},\
  and\ \bibinfo {author} {\bibnamefont {Arvind}},\ }\bibfield  {title}
  {\bibinfo {title} {Optimal characterization of {G}aussian channels using
  photon-number-resolving detectors},\ }\href
  {https://doi.org/10.1103/PhysRevA.102.012616} {\bibfield  {journal} {\bibinfo
   {journal} {Phys. Rev. A}\ }\textbf {\bibinfo {volume} {102}},\ \bibinfo
  {pages} {012616} (\bibinfo {year} {2020})}\BibitemShut {NoStop}%
\bibitem [{\citenamefont {Leonhardt}(1997)}]{leonhardt:qc1997a}%
  \BibitemOpen
  \bibfield  {author} {\bibinfo {author} {\bibfnamefont {U.}~\bibnamefont
  {Leonhardt}},\ }\href@noop {} {\emph {\bibinfo {title} {Measuring the Quantum
  State of Light}}}\ (\bibinfo  {publisher} {Cambridge University Press},\
  \bibinfo {address} {Cambridge, UK},\ \bibinfo {year} {1997})\BibitemShut
  {NoStop}%
\bibitem [{\citenamefont {Agarwal}\ and\ \citenamefont
  {Wolf}(1970{\natexlab{a}})}]{agarwal1970calculus1}%
  \BibitemOpen
  \bibfield  {author} {\bibinfo {author} {\bibfnamefont {G.~S.}\ \bibnamefont
  {Agarwal}}\ and\ \bibinfo {author} {\bibfnamefont {E.}~\bibnamefont {Wolf}},\
  }\bibfield  {title} {\bibinfo {title} {Calculus for functions of noncommuting
  operators and general phase-space methods in quantum mechanics. {I}.
  {M}apping theorems and ordering of functions of noncommuting operators},\
  }\href {https://doi.org/10.1103/PhysRevD.2.2161} {\bibfield  {journal}
  {\bibinfo  {journal} {Phys. Rev. D}\ }\textbf {\bibinfo {volume} {2}},\
  \bibinfo {pages} {2161} (\bibinfo {year} {1970}{\natexlab{a}})}\BibitemShut
  {NoStop}%
\bibitem [{\citenamefont {Agarwal}\ and\ \citenamefont
  {Wolf}(1970{\natexlab{b}})}]{agarwal1970calculus2}%
  \BibitemOpen
  \bibfield  {author} {\bibinfo {author} {\bibfnamefont {G.~S.}\ \bibnamefont
  {Agarwal}}\ and\ \bibinfo {author} {\bibfnamefont {E.}~\bibnamefont {Wolf}},\
  }\bibfield  {title} {\bibinfo {title} {Calculus for functions of noncommuting
  operators and general phase-space methods in quantum mechanics. {II}.
  {Q}uantum mechanics in phase space},\ }\href
  {https://doi.org/10.1103/PhysRevD.2.2187} {\bibfield  {journal} {\bibinfo
  {journal} {Phys. Rev. D}\ }\textbf {\bibinfo {volume} {2}},\ \bibinfo {pages}
  {2187} (\bibinfo {year} {1970}{\natexlab{b}})}\BibitemShut {NoStop}%
\bibitem [{\citenamefont {Severini}(2005)}]{severini2005elements}%
  \BibitemOpen
  \bibfield  {author} {\bibinfo {author} {\bibfnamefont {T.}~\bibnamefont
  {Severini}},\ }\href {https://books.google.am/books?id=RwsVdXuStIMC} {\emph
  {\bibinfo {title} {Elements of Distribution Theory}}},\ Cambridge Series in
  Statistical and Probabilistic Mathematics\ (\bibinfo  {publisher} {Cambridge
  University Press},\ \bibinfo {year} {2005})\BibitemShut {NoStop}%
\bibitem [{\citenamefont {Shalitin}\ and\ \citenamefont
  {Tikochinsky}(1979)}]{shalitin1979trans}%
  \BibitemOpen
  \bibfield  {author} {\bibinfo {author} {\bibfnamefont {D.}~\bibnamefont
  {Shalitin}}\ and\ \bibinfo {author} {\bibfnamefont {Y.}~\bibnamefont
  {Tikochinsky}},\ }\bibfield  {title} {\bibinfo {title} {Transformation
  between the normal and antinormal expansions of boson operators},\ }\href
  {https://doi.org/10.1063/1.524274} {\bibfield  {journal} {\bibinfo  {journal}
  {Journal of Mathematical Physics}\ }\textbf {\bibinfo {volume} {20}},\
  \bibinfo {pages} {1676} (\bibinfo {year} {1979})}\BibitemShut {NoStop}%
\bibitem [{\citenamefont {Mathai}\ and\ \citenamefont
  {Provost}(1992)}]{mathai1992quadratic}%
  \BibitemOpen
  \bibfield  {author} {\bibinfo {author} {\bibfnamefont {A.}~\bibnamefont
  {Mathai}}\ and\ \bibinfo {author} {\bibfnamefont {S.}~\bibnamefont
  {Provost}},\ }\href {https://books.google.com/books?id=tFOqQgAACAAJ} {\emph
  {\bibinfo {title} {Quadratic Forms in Random Variables}}},\ Statistics: A
  Series of Textbooks and Monographs\ (\bibinfo  {publisher} {Taylor \&
  Francis},\ \bibinfo {year} {1992})\BibitemShut {NoStop}%
\bibitem [{\citenamefont {Rudin}(1987)}]{rudin1987real}%
  \BibitemOpen
  \bibfield  {author} {\bibinfo {author} {\bibfnamefont {W.}~\bibnamefont
  {Rudin}},\ }\href {https://books.google.com/books?id=Z\_fuAAAAMAAJ} {\emph
  {\bibinfo {title} {Real and Complex Analysis}}},\ Higher Mathematics Series\
  (\bibinfo  {publisher} {McGraw-Hill Education},\ \bibinfo {year}
  {1987})\BibitemShut {NoStop}%
\bibitem [{\citenamefont {Serafini}(2017)}]{serafini2017quantum}%
  \BibitemOpen
  \bibfield  {author} {\bibinfo {author} {\bibfnamefont {A.}~\bibnamefont
  {Serafini}},\ }\href {https://doi.org/10.1201/9781315118727} {\emph {\bibinfo
  {title} {Quantum Continuous Variables: A Primer of Theoretical Methods}}}\
  (\bibinfo  {publisher} {CRC Press},\ \bibinfo {year} {2017})\BibitemShut
  {NoStop}%
\bibitem [{\citenamefont {Horn}\ and\ \citenamefont
  {Johnson}(1991)}]{horn1991topics}%
  \BibitemOpen
  \bibfield  {author} {\bibinfo {author} {\bibfnamefont {R.~A.}\ \bibnamefont
  {Horn}}\ and\ \bibinfo {author} {\bibfnamefont {C.~R.}\ \bibnamefont
  {Johnson}},\ }\href {https://doi.org/10.1017/CBO9780511840371} {\emph
  {\bibinfo {title} {Topics in Matrix Analysis}}}\ (\bibinfo  {publisher}
  {Cambridge University Press},\ \bibinfo {year} {1991})\BibitemShut {NoStop}%
\end{thebibliography}%
\end{document}